\newtheorem{theorem}{\textbf{Theorem}}
\newtheorem{corollary}[theorem]{\textbf{Corollry}}
\newtheorem{definition}[theorem]{\textbf{Definition}}
\newtheorem{lemma}[theorem]{\textbf{Lemma}}
\newtheorem{proposition}[theorem]{\textbf{Proposition}}
\newtheorem{remark}[theorem]{Remark}
\newenvironment{proof}[1][Proof]{\noindent\textbf{#1.} }{\ \rule{0.5em}{0.5em}}
\journal{Journal of The Franklin Institute}
\begin{document}

\begin{frontmatter}

\title{\textbf{Matrix-valued Laurent polynomials, parametric linear systems and integrable systems}}

\author[add1]{Nancy L\'{o}pez-Reyes\corref{mycorrespondingauthor}}
\cortext[mycorrespondingauthor]{Corresponding author}
\ead{nancy.lopez@udea.edu.co}
\author[add2]{Raul Felipe-Sosa}
\author[add3]{Raul Felipe}

\address[add1]{Instituto de Matem\'{a}ticas, Universidad de Antioquia \\
Calle 67 No. 53 - 108. Medellin, Colombia}
\address[add2]{FCFM, Benem\'{e}rita Universidad Aut\'{o}noma de Puebla \\
4 Sur 104 Centro Hist\'{o}rico C.P. 72000. Puebla, M\'{e}xico.}
\address[add3]{CIMAT \\
         Callej\'on Jalisco s/n Mineral de Valenciana \\
         Guanajuato, Gto, M\'exico.\\}

\begin{abstract}
In this paper, we study transfer functions corresponding to parametric linear systems whose coefficients are block matrices. Thus, these transfer functions constitute Laurent polynomials whose coefficients are square matrices. We assume that block matrices defining the parametric linear systems are solutions of an integrable hierarchy called by us, the block matrices version of the finite discrete KP hierarchy, which is introduced and studied with certain detail in this paper. We see that the linear system defined of the simplest solution of the integrable system is controllable and observable. Then, as a consequence of this fact, it is possible to verify that any solution of the integrable hierarchy, obtained by the dressing method of the simplest solution, defines a parametric linear system which is  also controllable and observable.
\end{abstract}

\begin{keyword}
integrable system \sep linear dynamical system \sep control theory \sep observability \\
\it{2010 Mathematics Subject Classification (MSC2010): 37K10, 93B05 }
\end{keyword}

\end{frontmatter}

\section{Introduction}

The study of parametric linear systems has been developed from the works of Brockett and Krishnaprasad \cite{brockett}.
The realization theory affirms that each linear system has a unique rational function associated with it. Through this
correspondence it has been studied some identification problems for these linear systems, see for instance \cite{brockett}, \cite{felipe2}, \cite{nancy}.
The evolution of the coefficients with respect to the parameters leads to the respective evolution of rational functions (called usually transfer functions).
Several authors have studied parametric linear systems for which the main coefficient is a solution with respect
to the parameters of certain integrable systems (hierarchies). We explain how they arise, if we fix a rational function then it can be
written by means of the simplest solution of the integrable system, and we also have an initial linear system for which the rational function is
its transfer function. Taking into account that
any integrable system is always related to some type of group factorization one can construct a family of linear systems having
as main coefficient the solutions of the integrable system. This approach to build families of parametric linear systems leads to an interesting
relationship between the linear control theory and the integrable systems. Indeed, a central question is which properties, from the point
of view of the linear control theory, are inherited from the initial linear system for the remaining elements of the family. The present
research is devoted to this question. Among the previous works, we must mention the article by Y. Nakamura \cite{nakamura} where the
Toda lattice is used. In the Nakamura paper, the reader can also consult other important references. Previous work by some of the authors
can be found in \cite{felipe2}, \cite{nancy}, \cite{nancy1}.

During the 70s and 80s of the last century, the $(2+1)$-dimensional KP-equation was studied in detail in the framework of the theory of integrable systems,
both in the continuous case and in the discrete one. It constitutes one of the key equations of mathematical physics. Among the reasons for this is its
relationship with the so-called traveling waves and solitons.
By this same time, Date, Jimbo, Kashiwara y Jimbo (more exactly in $1981$) introduced the $B$-type KP-equation or $BKP$-equation. This equation arises
from the $B$-type Lie algebras as opposed to the usual $KP$-equation which arises from the $A$-type algebras. For both nonlinear equations, special solutions can be obtained through their Hirota form (or bilinear form), such is the case of the $N$-lump solutions. In recent years, again the B-type KP-equation has resumed his prominence. In particular, recently the BKP-equation (also adding the case $(3+1)$-dimensional) has been linked to the study of rogue waves and other related topics (the interested reader can consult the following books \cite{kharif} and \cite{onorato}). Notably, these types of waves do not appear only in fluids, they have been observed in other media. An excellent mathematical reference is provided by the following articles which we strongly recommend \cite{fenga}, \cite{peng}, \cite{qin}, \cite{tian}, \cite{tu}, \cite{wang}, \cite{wang1}, \cite{wang2} and \cite{yan}.

Recently M. C. C\^{a}mara, A. F. dos Santos and P. F. dos Santos \cite{camara} have considered matrix equations of Lax type of the following form
\begin{equation}\label{i1}
\frac{d N(t,z)}{dt}=[N^{+}(t,z),N(t,z)],
\end{equation}
where the $n\times n$ matrix $N(t,z)$ depends of a parameter $z$ called spectral parameter varying on the unit circle $S^{1}$ ($N^{+}(t,z)$ is constructed through $N(t,z)$). Specifically,
$N(t,z)$ is a matrix-valued Laurent polynomial in $z$ and $N^{+}(t,z)$ is the part of $N(t,z)$ analytic in the unit disc $\mathbb{D}$.

Denote by $\left [ C^{1}(I) \right ]^{n\times n}$ the space of continuously differentiable $n\times n$ matrix functions on the open interval $I\subset \mathbb{R}^{+}$ (with respect to the variable $t$),
where $I$ will be considered a neighborhood of the origin.
The authors of the above mentioned work \cite{camara} considered the equation (\ref{i1}) with respect to Laurent polynomials $N(t,z)\in \left [ C^{1}(I) \right ]^{n\times n}$ of the form
\begin{equation}\label{i2}
N(t,z)=\sum_{k=-m}^{1}P_{k}(t)z^{k}=P_{1}(t)z+P_{0}(t)+\frac{P_{-1}(t)z^{m-1}+\cdots+P_{-m}(t)}{z^{m}} \quad\quad(m\in\mathbb{N},z\in S^{1}),
\end{equation}
for which in (\ref{i1}) we have $N^{+}(t)=P_{1}(y)z+P_{0}(t)$.

\begin{remark}In this moment, we must clarify the equation (\ref{i1}) for $N(t,z)$ given by (\ref{i2}), in particular we indicate how to understand $[N^{+}(t,z),N(t,z)]$.
Let $V$ an associative algebra over $\mathbb{C}$ and $z\neq 0$ a parameter not necessarily confined to $S^{1}$. Let $V((z^{-1}))=
\{\sum_{i=-\infty}^{q}v_{i}z^{i}|v_{i}\in V,\,\,q\in \mathbb{Z}\}$ be the set of all formal Laurent series with coefficients in $V$
and $V[z,z^{-1}]=\{\sum_{i=p}^{q}v_{i}z^{i}|v_{i}\in V,\,\,p,q\in \mathbb{Z}\}$ its subset whose elements are
formal Laurent polynomials. For any $L=\sum_{i=-\infty}^{q}v_{i}z^{i}$ we set $L_{+}=\sum_{i\geq 0}^{q}v_{i}z^{i}$, $L_{-}=\sum_{i< 0}^{q}v_{i}z^{i}$.

The elements in $V[z,z^{-1}]$ of the form $vz^{i}$ for $v\in V$ are called monomials. If we have two monomials $vz^{i}$ and $wz^{j}$ we can define the
Lie bracket or Lie product of
these monomials as the monomial $[vz^{i},wz^{j}]=[v,w]z^{i+j}$, where as usual for associative algebras $[v,w]=vw-wv$. This product can be extended by linearity to $V[z,z^{-1}]$.
In our case $V=M_{n}(\mathbb{C})$ is the algebra of complex matrices of order $n$. Then, the equation (\ref{i1}) must be understood as an equality between two elements of $M_{n}(\mathbb{C})[z,z^{-1}]$. We would like to observe that by taking some suitable subsets of $M_{n}(\mathbb{C})$ and restrict the parameter $z$ to $S^{1}$, in the study of (\ref{i1}) we can meet in a natural way with the notions of Loop algebras, Riemann-Hilbert problem etc
\cite{semenov}. However, it is not our goal to discuss these topics in this paper.
Now, since
\begin{equation*}
\frac{\partial N(t,z)}{\partial t}=\frac{\partial P_{1}(t)}{\partial t}z+\frac{\partial P_{0}(t)}{\partial t}+\frac{\partial P_{-1}(t)}{\partial t}z^{-1}
+\cdots +\frac{\partial P_{-m}(t)}{\partial t}z^{-m},
\end{equation*}
and
\begin{align*}
[N^{+}(t,z),N(t,z)]&=[P_{1},P_{-1}]+\left ([P_{1},P_{-2}]+[P_{0},P_{-1}] \right )z^{-1}+\left ([P_{1},P_{-3}]+[P_{0},P_{-2}] \right )z^{-2}+\cdots + \\
&\,\,\,\,\,\,\left ([P_{1},P_{-(k+1)}]+[P_{0},P_{-k}] \right )z^{-k}+\cdots +\left ([P_{1},P_{-m}]+[P_{0},P_{-(m-1)}] \right )z^{-(m-1)}+[P_{0},P_{-m}]z^{-m},
\end{align*}
then we conclude that the matrix $P_{1}(t)$ must be constant and (\ref{i1}) is equivalent to the nonlinear system
\begin{equation*}
\frac{\partial P_{0}(t)}{\partial t}=[P_{1},P_{-1}],\cdots ,\frac{\partial P_{-k}(t)}{\partial t}=[P_{1},P_{-(k+1)}]+[P_{0},P_{-k}],\cdots,
\frac{\partial P_{-m}(t)}{\partial t}=[P_{0},P_{-m}],
\end{equation*}
where $1\leq k \leq m-1$.
\end{remark}

In this paper, unlike the work mentioned above, we consider the matrix-valued Laurent polynomial as the transfer function of certain linear system and for this system, we study its properties of controllability and observability when the coefficients evolve by means of an integrable hierarchy. Justly, we consider the particular case of Laurent polynomial (\ref{i2}) with $P_{0}=P_{1}=O_{n}$, where $O_{k}$ stands for the $k\times k$ zero matrix for any $k\geq 1$. This is the fundamental reason why below we do not use the Lax equation (\ref{i1}), instead we introduce and study a block matrix version of the finite discrete KP hierarchy.
As we already mentioned, in the present work, we only will consider matrix-valued Laurent polynomials of the form
\begin{equation}\label{ri2}
L(t,z)=\sum_{k=-m}^{1}P_{k}(t)z^{k}=\frac{P_{-1}(t)z^{m-1}+\cdots+P_{-m}(t)}{z^{m}}.
\end{equation}

From now on, we will assume a more general situation in which $L=L(t_{1},\cdots, t_{m-1})\in [C^{1}(I^{m-1})]^{n\times n}$, in other words, any matrix $P_{-k}$ involved in the definition of our Laurent polynomial $L$ depends of $m-1$ variables $t_{1},\cdots, t_{m-1}$ for $2\leq m$, and each one of these variables takes values in $I$. Besides, in this work, unless otherwise specified, all matrices will have real entries.

Next, we briefly review the $(k-1)$-dimensional left-projective spaces over the real or complex $n\times n$ matrices \cite{schwarz-zaks}.
Real or complex $tn\times sn$ matrices with $t,s\geq 1$ and $t\neq s$ or $t=s$ for $t,s\geq 2$ are denoted by calligraphic capital letters. One writes the $n\times sn$ matrix $\mathcal{Y}$ in block form: $\mathcal{Y}=(Y_{1},\cdots,Y_{s})$, in which each $Y_{i}$ is an $n\times n$ matrix. $R_{0}(sn^{2})$ will be the set of real or complex $n\times sn$ matrices $\mathcal{Y}$ of rank equal to $n$, that is,
the $n$ rows of $\mathcal{Y}$ are linearly independent (this, in turn is identical to the maximal number of linearly independent columns of $\mathcal{Y}$). In other words, $\mathcal{Y}\in R_{0}(sn^{2})$
if and only if by definition the dimension of the vector space spanned by its $n$ rows is exactly $n$ (it shows that $n$ is also the dimension of the vector space spanned by its columns). The notation here
is very important, for instance, let's pay attention a moment to $R_{0}(32)$ for which the parameters $s$ and $n$ can take different values. If $\mathcal{Y}\in R_{0}(32)$ in the case for which $s=2$ and $n=4$
then one can find $4$ columns of $\mathcal{Y}$ representing a basis of $\mathbb{C}^{4}$, while if $\mathcal{Y}\in R_{0}(32)$ when $s=8$ and $n=2$ then this implies that between the $16$ columns of $\mathcal{Y}$
there exists at least a basis of $\mathbb{C}^{2}$. This tells us that it is convenient to keep the notation $R_{0}(2(4)^{2})$, $R_{0}(8(2)^{2})$, etc instead of $R_{0}(32)$.

$R_{0}(sn^{2})$ is a connected topological space and its topology is defined by means of any matrix norm.

Two matrices $\mathcal{Y}=(Y_{1},\cdots,Y_{s})$ and $\mathcal{U}=(U_{1},\cdots,U_{s})$ of $R_{0}(sn^{2})$ are left- or row-equivalent if there exists an $n\times n$ invertible matrix $S$ such that
\begin{equation}\label{s1}
\mathcal{U}=(U_{1},\cdots,U_{s})=(SY_{1},\cdots,SY_{s})=S\mathcal{Y},\,\,\,\,\,\,|S|\neq 0.
\end{equation}

This relation partitions $R_{0}(sn^{2})$ into equivalence classes of row-equivalent matrices. These equivalence classes are the points of the $(s-1)$-dimensional left-projective space over the real or complex $n\times n$ matrices $\mathbb{P}_{(s-1)}(M_{n}(\mathbb{K}))$, where $\mathbb{K}$ is $\mathbb{R}$ or $\mathbb{C}$. The projective mappings $\mathcal{C}$ of this left-projective space are given by means of constant invertible $sn\times sn$ matrices. $\mathcal{C}$ is written in block form
\begin{equation}\label{s2}
\mathcal{C}=\left(
              \begin{array}{ccc}
                C_{11} & \cdots & C_{1s} \\
                \vdots &  & \vdots \\
                C_{s1} & \cdots & C_{ss} \\
              \end{array}
            \right),\,\,\,\,\,\,\,\,|\mathcal{C}|\neq 0,
\end{equation}
where each block $C_{ij}$, $i,j=1,\cdots,s$ is an $n\times n$ matrix. For $\mathcal{C}$ fixed, one defines
\begin{equation}\label{s3}
\widetilde{\mathcal{Y}}=(\widetilde{Y}_{1},\cdots,\widetilde{Y}_{s})=\mathcal{C}(\mathcal{Y})=\mathcal{Y}\mathcal{C}
=(Y_{1},\cdots,Y_{s})\left(
              \begin{array}{ccc}
                C_{11} & \cdots & C_{1s} \\
                \vdots &  & \vdots \\
                C_{s1} & \cdots & C_{ss} \\
              \end{array}
            \right),
\end{equation}
for all $\mathcal{Y}\in \mathbb{P}_{(s-1)}(M_{n}(\mathbb{K}))$, then $\mathcal{C}(\mathcal{Y})\in \mathbb{P}_{(s-1)}(M_{n}(\mathbb{K}))$.
If $\mathcal{U}=S\mathcal{Y}$ where $|S|\neq 0$, then $\widetilde{\mathcal{U}}=\mathcal{U}\mathcal{C}=S\mathcal{Y}\mathcal{C}=S\widetilde{\mathcal{Y}}$;
hence row-equivalent matrices have row-equivalent transformations. Thus, the transformation (\ref{s3}) induces a transformation of
$\mathbb{P}_{(s-1)}(M_{n}(\mathbb{K}))$ onto itself. From now on, for our purpose, it could be convenient to use matrices of $R_{0}(mn^{2})$ and
invertible $mn\times mn$ matrices which will be written in block form.

We would like to continue this section with an observation on $L(t,z)$ given by (\ref{ri2}) which represents an extension of the theory of
realization to matrix-valued Laurent polynomials of the form (\ref{ri2}). We have
\begin{equation}\label{i3}
L(t,z)=\left (I_{n},O_{n},\ldots,O_{n} \right )\Pi(z)(P^{T}_{-1}(t),\ldots,P^{T}_{-m}(t))^{T},
\end{equation}
where $\left (I_{n},O_{n},\ldots,O_{n} \right )\in R_{0}(mn^{2})$ and $I_{k}$ denotes for $1\leq k$ the identity matrix of order $k$, moreover

\begin{align*}
\Pi(z)=(zI_{nm}-\Lambda)^{-1}
&=\left ( z \left(
                           \begin{array}{ccccc}
                             I_{n} & O_{n} & \cdots & \cdots & O_{n} \\
                             O_{n} & \ddots & \ddots & \ddots & \vdots \\
                             \vdots & \ddots & \ddots & \ddots & \vdots \\
                             \vdots & \ddots & \ddots & \ddots & O_{n} \\
                             O_{n} & \cdots & \cdots & O_{n} & I_{n} \\
                           \end{array}
                         \right)
                         -\left(
                           \begin{array}{ccccc}
                             O_{n} & I_{n} & O_{n} & \cdots & O_{n} \\
                             \vdots & \ddots & \ddots & \ddots & \vdots \\
                             \vdots & \ddots & \ddots & \ddots & O_{n} \\
                             \vdots & \ddots & \ddots & \ddots & I_{n} \\
                             O_{n} & \cdots & \cdots & \cdots & O_{n} \\
                           \end{array}
                         \right)\right )^{-1} \\
\,\,\,\,\,\,\,\,&=\left(
                           \begin{array}{ccccc}
                             \frac{I_{n}}{z} & \frac{I_{n}}{z^{2}} & \frac{I_{n}}{z^{3}} & \cdots & \frac{I_{n}}{z^{m}} \\
                             O_{n} & \ddots & \ddots & \ddots & \vdots \\
                             \vdots & \ddots & \ddots & \ddots & \frac{I_{n}}{z^{3}} \\
                             \vdots & \ddots & \ddots & \ddots & \frac{I_{n}}{z^{2}} \\
                             O_{n} & \cdots & \cdots & O_{n} & \frac{I_{n}}{z} \\
                           \end{array}
                         \right),
\end{align*}
where $\Lambda$ is the shift block matrix of order $mn\times mn$.

The equality (\ref{i3}) holds for all $t\in I$, in particular
\begin{equation}\label{i6}
L(0)=\left (I_{n},O_{n},\ldots,O_{n} \right )\Pi(z) (P^{T}_{-1}(0),\ldots,P^{T}_{-m}(0))^{T}.
\end{equation}

We recall the following result which is known as the \textbf{Schur determinant lemma} (see \cite{zhang} for more details)

\begin{lemma}Let $P,Q,S,R$ denote $n\times n$ matrices and suppose that $P$ and $R$ commute. Then the determinant $|M|$ of the $2n\times 2n$
matrix
\begin{equation*}
M=\left(
    \begin{array}{cc}
      P & Q \\
      R & S \\
    \end{array}
  \right),
\end{equation*}
is equal to the determinant of the matrix $PS-RQ$.
\end{lemma}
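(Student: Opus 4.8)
The plan is to prove the identity by reducing to the case in which $P$ is invertible and then performing a block triangularization.

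First I would treat the case $|P|\neq 0$. Left-multiplying $M$ by the block lower unitriangular matrix $\begin{pmatrix} I_{n} & O_{n} \\ -RP^{-1} & I_{n} \end{pmatrix}$, which has determinant $1$, yields
\[
\begin{pmatrix} I_{n} & O_{n} \\ -RP^{-1} & I_{n} \end{pmatrix}\begin{pmatrix} P & Q \\ R & S \end{pmatrix}=\begin{pmatrix} P & Q \\ O_{n} & S-RP^{-1}Q \end{pmatrix},
\]
a block upper triangular matrix. Hence $|M|=|P|\,|S-RP^{-1}Q|=|P(S-RP^{-1}Q)|=|PS-PRP^{-1}Q|$ by multiplicativity of the determinant. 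Since $P$ and $R$ commute we have $PRP^{-1}=R$, so $|M|=|PS-RQ|$, which is the claim in this case.

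Second, to remove the invertibility hypothesis I would use a standard perturbation argument. Introduce a scalar indeterminate $x$ and replace $P$ by $P+xI_{n}$; this matrix still commutes with $R$. Both $\left|\begin{pmatrix} P+xI_{n} & Q \\ R & S \end{pmatrix}\right|$ and $|(P+xI_{n})S-RQ|$ are polynomials in $x$ of degree at most $n$ (in the first, $x$ occurs only along the diagonal of the top-left block). By the previous paragraph these two polynomials agree for every $x$ such that $P+xI_{n}$ is invertible, that is, for all but finitely many $x$; therefore they are identically equal. Evaluating at $x=0$ gives $|M|=|PS-RQ|$ in full generality.

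The computations above are routine block matrix algebra. The hardest part, such as it is, will be justifying the passage from the generic case back to $x=0$; but this is immediate once both determinants are recognized as entrywise polynomial functions of the parameter $x$, so I do not anticipate any serious obstacle.
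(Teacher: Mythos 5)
Your proof is correct. Note, however, that the paper itself does not prove this lemma at all: it is simply recalled as a known result, with the reader referred to Zhang's book \emph{The Schur complement and its applications} for details, and the same is done for the subsequent ``Schur's formula'' $\det(M/P)=\det M/\det P$. So your argument is necessarily a different (self-contained) route, and it is worth comparing with what the paper relies on. Your first step, left-multiplication by the unitriangular block matrix with $-RP^{-1}$ in the lower-left corner, is precisely the Schur-complement factorization that underlies the quoted Schur's formula; once $|M|=|P|\,|S-RP^{-1}Q|$ is in hand, the commutativity $PR=RP$ is used exactly where it should be, to turn $|PS-PRP^{-1}Q|$ into $|PS-RQ|$. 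Your second step, replacing $P$ by $P+xI_{n}$ (which still commutes with $R$) and observing that the two sides are polynomial functions of $x$ agreeing for all but finitely many $x$, correctly removes the invertibility hypothesis, which the bare statement of the lemma does not assume; this is the one point where a purely Schur-complement argument, and hence the paper's quoted formula, would not suffice on its own. In short: you get a complete elementary proof where the paper offers only a citation, at the modest cost of the perturbation argument, whose justification (polynomiality in $x$ and agreement on an infinite set over $\mathbb{R}$ or $\mathbb{C}$) you have stated accurately.
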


There exists a generalization in certain sense of the previous result which can be found also in \cite{zhang} for any square matrix $M$. Consider now that $M$ is partitioned where $P,Q,S,R$
do not necessarily have the same dimension. Suppose $P$ is nonsingular and denote the matrix $S-RP^{-1}Q$ by $M/P$ and call it the Schur complement of $P$ in $M$, or the Schur complement of $M$
relative to $P$. In the same spirit, if $S$ is nonsingular, the Schur complement of $S$ in $M$ is $M/S=P-QS^{-1}R$. The following result is well known
\begin{theorem}(Schur's Formula) Let $M$ be a partitioned square matrix. If $P$ is nonsingular, then
\begin{equation}\label{i7}
det(M/P)=\frac{det M}{det P}.
\end{equation}
\end{theorem}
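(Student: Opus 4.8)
The plan is to deduce Schur's Formula from the multiplicativity of the determinant by exhibiting an explicit block--triangular factorization of $M$. Partition $M$ as
\begin{equation*}
M=\left(
    \begin{array}{cc}
      P & Q \\
      R & S \\
    \end{array}
  \right),
\end{equation*}
with $P$ the nonsingular square block and $Q,R,S$ of the compatible (not necessarily square) sizes. Since $P^{-1}$ exists, a direct block computation verifies
\begin{equation*}
M=\left(
    \begin{array}{cc}
      I & 0 \\
      RP^{-1} & I \\
    \end{array}
  \right)\left(
    \begin{array}{cc}
      P & Q \\
      0 & S-RP^{-1}Q \\
    \end{array}
  \right),
\end{equation*}
where the two identity blocks $I$ have the sizes of $P$ and of $S$ respectively, and $S-RP^{-1}Q=M/P$ by definition.

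Next I would take determinants of both sides. The first factor is block lower--triangular with unit diagonal blocks, so its determinant is $1$; the second factor is block upper--triangular with diagonal blocks $P$ and $M/P$, so its determinant is $\det P\,\det(M/P)$. Here I use that the determinant of a block--triangular matrix equals the product of the determinants of its diagonal blocks, a fact valid even when the blocks are not all of the same size --- one reduces the off--diagonal block to zero by elementary row operations, which preserve the determinant, and then expands; alternatively this is recorded in \cite{zhang}. Applying multiplicativity of the determinant to the displayed factorization gives $\det M=\det P\,\det(M/P)$, and since $P$ is nonsingular we have $\det P\neq 0$, so dividing yields \eqref{i7}.

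The argument is essentially free of obstacles: the only step deserving an explicit mention rather than being taken for granted is the block--triangular determinant formula for rectangularly partitioned matrices, and that is routine. As a remark, the Schur determinant lemma recalled above is recovered from \eqref{i7} in the case that all four blocks are square of the same size and $P$ is invertible with $PR=RP$, since then $\det M=\det P\,\det(S-RP^{-1}Q)=\det(PS-PRP^{-1}Q)=\det(PS-RQ)$; for the general formula \eqref{i7} the factorization route above is the cleanest.
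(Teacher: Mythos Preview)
Your proof is correct and is the standard argument via the block $LU$ factorization. Note, however, that the paper does not actually give a proof of this theorem: it is stated as a well-known result with a reference to \cite{zhang}, so there is no ``paper's own proof'' to compare against. Your factorization approach is precisely the classical one found in that reference and elsewhere.
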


\section{Definition of the hierarchy}

In this section, we present the bases that allow us to build and study our integrable hierarchy. More exactly, we introduce a block
matrix version of the finite discrete KP hierarchy through the $mn\times mn$ block matrix shift operator $\Lambda$ acting
on $mn\times n$ column matrices $\mathcal{Y}=(Y_{1},\cdots,Y_{m})^{T}$ where each $Y_{k}$ is an $n\times n$ matrix for any $k$, that is
\begin{equation}\label{B1}
\Lambda=\left(
                           \begin{array}{ccccc}
                             O_{n} & I_{n} & O_{n} & \cdots & O_{n} \\
                             \vdots & \ddots & \ddots & \ddots & \vdots \\
                             \vdots & \ddots & \ddots & \ddots & O_{n} \\
                             \vdots & \ddots & \ddots & \ddots & I_{n} \\
                             O_{n} & \cdots & \cdots & \cdots & O_{n} \\
                           \end{array}
                         \right),
\end{equation}
in particular, we develop a block matrix Borel-Gauss approach for this integrable system.

Below for two square matrices $A$ and $B$ of the same order, we use the notation $[A,B]$ to indicate the Lie product of both matrices,
that is, $[A,B]=AB-BA$. Define
\begin{equation}\label{B2}
H=\Lambda+D_{0}+\sum_{k=1}^{m-1}D_{k}\left( \Lambda^{T}\right )^{k},
\end{equation}
where the $D_{k}$ are $mn\times mn$ block diagonal matrices for $k=0, 1\ldots, m-1$. The entries of $H$ are assumed to be functions of $m-1$ variables $t_{1},\ldots, t_{m-1}$. The $mn\times mn$ matrix $H$ will be called a Lax matrix if it satisfies the following equations
\begin{equation}\label{B3}
\frac{\partial H}{\partial t_{r}}=\left [H^{r}_{\geq},H\right ],\quad\quad\quad r=1,\ldots,m-1,
\end{equation}
where $M_{\geq}$ ($M_{>}$) denotes the (strictly) upper triangular part of a matrix $M$, analogously $M_{\leq}$ ($M_{<}$) denotes
the (strictly) lower triangular part of $M$.
The set of equations (\ref{B3}) is called the block matrix finite discrete KP hierarchy.
Observe that the simplest solution of the hierarchy (\ref{B3}) is $H=\Lambda$.

Next, we will clarify these equations but first we will make some simple observations of linear algebra related with the shift block matrix $\Lambda$\,:
\begin{itemize}
  \item Note that $\Lambda\Lambda^{T}$ and $\Lambda\Lambda^{T}$ are both diagonal matrices. Indeed, we have
  \begin{equation*}
  \Lambda\Lambda^{T}=\left(
                       \begin{array}{ccccc}
                         I_{n} & O_{n} & \ldots & \ldots & O_{n} \\
                         O_{n} & \ddots & \ddots & \ddots & \vdots \\
                         \vdots & \ddots & \ddots & \ddots & \vdots \\
                         \vdots & \ddots & \ddots & I_{n} & \vdots \\
                         O_{n} & \ldots & \ldots & \ldots & O_{n} \\
                       \end{array}
                     \right),\,\,\,\,\,\,\,\,\,\,\,\,\,\,\Lambda\Lambda^{T}=\left(
                       \begin{array}{ccccc}
                         O_{n} & \ldots & \ldots & \ldots & O_{n} \\
                         \vdots & I_{n} & \ddots & \ddots & \vdots \\
                         \vdots & \ddots & \ddots & \ddots & \vdots \\
                         \vdots & \ddots & \ddots & \ddots & O_{n} \\
                         O_{n} & \ldots & \ldots & O_{n} & I_{n} \\
                       \end{array}
                     \right).
  \end{equation*}
  \item Let $D$ be a $nm\times nm$ diagonal block matrix, then there is a $mn\times mn$ diagonal block matrix $R$ such that
  $(\Lambda^{T})^{k}D=R(\Lambda^{T})^{k}$ where $1\leq k\leq m-1$. Specifically,
  \begin{equation*}
  R=(O_{n},\cdots,O_{n},(D)_{11},\cdots,(D)_{(m-k)(m-k)}),
  \end{equation*}
  see the proof of proposition \ref{propreview1} for more details.
  \item In our paper, we take advantage of the fact that $(\Lambda^{T})^{m}=(\Lambda^{T})^{m+1}=(\Lambda^{T})^{m+2}=\cdots=O_{nm}$, that is,
  $\Lambda^{T}$ is a nilpotent matrix of degree equal to $m$.
  \item Let $A=B(\Lambda^{T})^{k}$ where $B$ is a diagonal block matrix. Then, the block matrix $A$ has all its entries equal to zero, except
  those of its k-th diagonal below the main diagonal. Additionally, it shows that $A$ is a strict lower triangular matrix. It implies that
  if $P^{1}=D^{1}_{1}\Lambda^{T}+D^{1}_{2}(\Lambda^{T})^{2}+\cdots +D^{1}_{m-1}(\Lambda^{T})^{m-1}$, and $P^{2}=D^{2}_{1}\Lambda^{T}+D^{2}_{2}(\Lambda^{T})^{2}+\cdots +D^{2}_{m-1}(\Lambda^{T})^{m-1}$,
  where each $D^{j}_{i}$ is a diagonal block matrix for $j=1,2$ and $1\leq i \leq m-1$ are two strict lower triangular matrices, then $P^{1}=P^{2}$ if and only if $D^{1}_{k}=D^{2}_{k}$ for $k=1,\cdots, m-1$.
  Of course, the previous observation also applies in the case where the diagonal block matrices $D^{j}_{i}$ are functions of a variable $t$. Again, if $A$ is an arbitrary diagonal block $nm\times nm$
  matrix (in particular $A$ will be the identity $I_{nm}$) then in order to $A+P^{1}=A+P^{2}$ is necessary and sufficient that $D^{1}_{k}=D^{2}_{k}$ for all $k$.
  \item If $A$ is an arbitrary diagonal block matrix, then $\Lambda A(\Lambda)^{T}$ is a diagonal block matrix. More exactly,
  \begin{equation*}
  \Lambda A(\Lambda)^{T}=((A)_{22},\cdots,(A)_{mm},O_{n}).
  \end{equation*}
\end{itemize}

We return to our goal of clarifying the equations (\ref{B3}). Claim that each side of the equations (\ref{B3}) for a fixed $r$ is of the form
\begin{equation*}
L_{r}=L_{0}^{r}+L_{1}^{r}\Lambda^{T}+\cdots+L_{k}^{r}(\Lambda^{T})^{k}+\cdots+L_{m-1}^{r}(\Lambda^{T})^{m-1},
\end{equation*}
where any $L_{k}^{r}$ is a diagonal block matrix for $1\leq k\leq m-1$. In the case of the left side of (\ref{B3}) is easy to prove the affirmation,
because
\begin{equation*}
L_{r}^{left}=\frac{\partial H}{\partial t_{r}}=\frac{\partial D_{0}}{\partial t_{r}}+\sum_{k=1}^{m-1}\frac{\partial D_{k}}{\partial t_{r}}
\left( \Lambda^{T}\right )^{k}.
\end{equation*}

The presentation of $L_{r}^{right}$ requires a little more attention and it makes use of some of the previous observations. First notice that
for a fixed $r$
\begin{align*}
[H^{r}_{\geq}, H]&=[H^{r}_{\geq}+H^{r}_{<}-H^{r}_{<}, H]=[H^{r}-H^{r}_{<}, H]=-[H^{r}_{<}, H]=[H,H^{r}_{<}]=[\Lambda+D_{0}+H_{<},H^{r}_{<}] \\
&=[\Lambda+D_{0},H^{r}_{<}]+[H_{<},H^{r}_{<}],
\end{align*}
now, it is clear from the previous observations that $[D_{0}+ H_{<},H^{r}_{<}]=L^{right,1}_{1}\Lambda^{T}+\cdots+L^{right,1}_{m-1}(\Lambda^{T})^{m-1}$,
where the $L^{right,1}_{k}$ are diagonal block matrices of order $nm$ for $1\leq k\leq m-1$. On the other hand, $[\Lambda,H^{r}_{<}]=L^{right,2}_{0}+L^{right,2}_{1}\Lambda^{T}+\cdots+L^{right,2}_{m-1}(\Lambda^{T})^{m-1}$, it implies that
$[H^{r}_{\geq}, H]=L^{right}_{0}+L^{right}_{1}\Lambda^{T}+\cdots+L^{right}_{m-1}(\Lambda^{T})^{m-1}$, where $L^{right}_{k}$ is a diagonal
block matrix of order $nm$ for any $k$ with $0\leq k \leq m-1$, because $L^{right}_{0}=L^{right,2}_{0}$ and $L^{right}_{k}=L^{right,1}_{k}+L^{right,2}_{k}$
for $1\leq k \leq m-1$. Hence, the equation (\ref{B3}) for this $r$ fixed is equivalent to the following closed nonlinear system of ordinary
differential equations in the entries of $H$\,:
\begin{equation*}
\frac{\partial D_{0}}{\partial t_{r}}=L^{right}_{0},\cdots\cdots \frac{\partial D_{k}}{\partial t_{r}}=L^{right}_{k},\cdots\cdots
\frac{\partial D_{m-1}}{\partial t_{r}}=L^{right}_{m-1},
\end{equation*}
here $0\leq k \leq m-1$.

Let us consider $mn\times mn$ block matrices of the form
\begin{equation}\label{B4}
S=I_{nm}+\sum_{k=1}^{m-1}S_{k}\left( \Lambda^{T}\right )^{k},
\end{equation}
where $S_{k}$ is in the class of all $mn\times mn$ block diagonal matrices for $k=1,\ldots,m-1$.

One can see that a matrix of the form (\ref{B4}) is not singular, that is, there exists $S^{-1}$. Indeed, suppose that $Sx=0$ for
$x=(\overline{x}_{1},\cdots,\overline{x}_{m})\in \mathbb{R}^{mn}$. Then, we have
\begin{align*}
\left(
  \begin{array}{c}
    O_{n} \\
    \vdots \\
    \vdots \\
    \vdots \\
    O_{n} \\
  \end{array}
\right)&=\left(
  \begin{array}{c}
    \overline{x}_{1} \\
    \vdots \\
    \vdots \\
    \vdots \\
    \overline{x}_{m} \\
  \end{array}
\right)+\left(
                           \begin{array}{ccccc}
                             D^{1}_{11} & O_{n} & \cdots & \cdots & O_{n} \\
                             O_{n} & \ddots & \ddots & \ddots & \vdots \\
                             \vdots & \ddots & \ddots & \ddots & \vdots \\
                             \vdots & \ddots & \ddots & \ddots & O_{n} \\
                             O_{n} & \cdots & \cdots & O_{n} & D^{1}_{mm} \\
                           \end{array}
                         \right)\left(
                           \begin{array}{ccccc}
                             O_{n} & O_{n} & \cdots & \cdots & O_{n} \\
                             I_{n} & \ddots & \ddots & \ddots & \vdots \\
                             \vdots & \ddots & \ddots & \ddots & \vdots \\
                             \vdots & \ddots & \ddots & \ddots & O_{n} \\
                             O_{n} & \cdots & \cdots & I_{n} & O_{n} \\
                           \end{array}
                         \right)\left(
  \begin{array}{c}
    \overline{x}_{1} \\
    \vdots \\
    \vdots \\
    \vdots \\
    \overline{x}_{m} \\
  \end{array}
\right)+\cdots  \\
&\,\,\,\,\,\,\,\,+\left(
                           \begin{array}{ccccc}
                             D^{m-1}_{11} & O_{n} & \cdots & \cdots & O_{n} \\
                             O_{n} & \ddots & \ddots & \ddots & \vdots \\
                             \vdots & \ddots & \ddots & \ddots & \vdots \\
                             \vdots & \ddots & \ddots & \ddots & O_{n} \\
                             O_{n} & \cdots & \cdots & O_{n} & D^{m-1}_{mm} \\
                           \end{array}
                         \right)\left(
                           \begin{array}{ccccc}
                             O_{n} & O_{n} & \cdots & \cdots & O_{n} \\
                             O_{n} & \ddots & \ddots & \ddots & \vdots \\
                             \vdots & \ddots & \ddots & \ddots & \vdots \\
                             O_{n} & \ddots & \ddots & \ddots & O_{n} \\
                             I_{n} & O_{n} & \cdots & O_{n} & O_{n} \\
                           \end{array}
                         \right)\left(
  \begin{array}{c}
    \overline{x}_{1} \\
    \vdots \\
    \vdots \\
    \vdots \\
    \overline{x}_{m} \\
  \end{array}
\right),
\end{align*}
now, from this equality follows that
\begin{equation*}
\begin{array}{c}
  \overline{x}_{1}=O_{n}, \\
  \overline{x}_{2}+D^{1}_{22}\overline{x}_{1}=O_{n}, \\
  \overline{x}_{3}+D^{1}_{33}\overline{x}_{2}+D^{2}_{33}\overline{x}_{1}=O_{n}, \\
  \cdots \\
  \cdots \\
  \overline{x}_{m-1}+D^{1}_{(m-1)(m-1)}\overline{x}_{m-2}+\cdots+D^{m-3}_{(m-1)(m-1)}\overline{x}_{2}+D^{m-2}_{(m-1)(m-1)}\overline{x}_{1}=O_{n}, \\
   \overline{x}_{m}+D^{1}_{mm}\overline{x}_{m-1}+D^{2}_{mm}\overline{x}_{m-2}+\cdots+D^{m-2}_{mm}\overline{x}_{2}+D^{m-1}_{mm}\overline{x}_{1}=O_{n}.
\end{array}
\end{equation*}

This implies that $\overline{x}_{1}=\overline{x}_{2}=\cdots=\overline{x}_{m-1}=\overline{x}_{m}=O_{n}$. Hence, $x=0$. For instance, for $m=2$
we have that $(\mathcal{I}+S_{1}\Lambda^{T})^{-1}=\mathcal{I}-S_{1}\Lambda^{T}$.
In this sense, we have

\begin{proposition}\label{propreview1}
Let $m\leq 2$ be fixed and the matrix $S$ define by (\ref{B4}) given, then its inverse matrix $S^{-1}$ is of the form
\begin{equation}\label{B5}
S^{-1}=I_{nm}+\sum_{k=1}^{m-1}S^{I}_{k}\left( \Lambda^{T}\right )^{k},
\end{equation}
where for each $k$ the matrix $S^{I}_{k}$ is in the class of all $mn\times mn$ block diagonal matrices.
\end{proposition}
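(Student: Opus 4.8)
The plan is to write $S=I_{nm}+N$ with $N:=\sum_{k=1}^{m-1}S_{k}(\Lambda^{T})^{k}$, to observe that $N$ is nilpotent, and then to read off $S^{-1}$ from the resulting \emph{finite} Neumann series, verifying that it again has the shape (\ref{B5}).

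The one structural input is the commutation relation recorded in the second bullet above: for any $mn\times mn$ block diagonal $D$ one has $(\Lambda^{T})^{k}D=R(\Lambda^{T})^{k}$ with $R$ again block diagonal. Iterating it, a product of two typical summands satisfies
\begin{equation*}
\bigl(D(\Lambda^{T})^{i}\bigr)\bigl(E(\Lambda^{T})^{j}\bigr)=D\,\bigl((\Lambda^{T})^{i}E\bigr)(\Lambda^{T})^{j}=(DR_{E})(\Lambda^{T})^{i+j},
\end{equation*}
with $DR_{E}$ block diagonal. Hence the set $\mathcal{F}$ of all matrices $\sum_{k\ge 1}E_{k}(\Lambda^{T})^{k}$ with each $E_{k}$ block diagonal is closed under multiplication, and --- this is the crux --- a product of $j$ factors from $\mathcal{F}$ only involves the powers $(\Lambda^{T})^{\ell}$ with $\ell\ge j$. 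Since $N\in\mathcal{F}$, the matrix $N^{j}$ is therefore a sum of terms $E^{(j)}_{\ell}(\Lambda^{T})^{\ell}$ with $\ell\ge j$ and $E^{(j)}_{\ell}$ block diagonal; using $(\Lambda^{T})^{m}=O_{nm}$ (third bullet) this forces $N^{m}=O_{nm}$, so $N$ is nilpotent of index at most $m$.

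Consequently the finite sum $T:=\sum_{j=0}^{m-1}(-1)^{j}N^{j}$ satisfies $(I_{nm}+N)T=I_{nm}$, since the alternating sum telescopes and $N^{m}=O_{nm}$; as $S$ was already shown to be nonsingular, this identifies $S^{-1}=T$. It then remains to group the terms of $T=I_{nm}+\sum_{j=1}^{m-1}(-1)^{j}N^{j}$ by the power of $\Lambda^{T}$: the constant term is exactly $I_{nm}$ (each $N^{j}$ with $j\ge 1$ has no constant term), and for $1\le\ell\le m-1$ the coefficient of $(\Lambda^{T})^{\ell}$ is, by the computation above, a finite sum of products of block diagonal matrices, hence a block diagonal matrix $S^{I}_{\ell}$. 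This is precisely (\ref{B5}). As a consistency check, for $m=2$ one recovers $(I_{nm}+S_{1}\Lambda^{T})^{-1}=I_{nm}-S_{1}\Lambda^{T}$, as already noted (the hypothesis should of course read $m\ge 2$).

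The only real work is the bookkeeping in that last step --- keeping track of which block diagonal matrix multiplies each $(\Lambda^{T})^{\ell}$ once all the commutations have been pushed through --- but because $\mathcal{F}$ is multiplicatively closed and every power beyond $m-1$ dies, no convergence or invertibility subtlety can arise and the step is purely formal. A fully equivalent alternative, avoiding the Neumann expansion, is to solve $SS^{-1}=I_{nm}$ directly for the unknown block diagonal coefficients $S^{I}_{1},\dots,S^{I}_{m-1}$ one power of $\Lambda^{T}$ at a time: matching the coefficient of $(\Lambda^{T})^{\ell}$ expresses $S^{I}_{\ell}$ in terms of $S_{1},\dots,S_{\ell}$ and $S^{I}_{1},\dots,S^{I}_{\ell-1}$, so the recursion is triangular and always solvable, which gives the same conclusion.
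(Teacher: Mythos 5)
Your proof is correct, and it takes a genuinely different route from the paper. The paper proves the proposition by positing a left inverse of the form (\ref{B5}) and solving (\ref{equationinv}) coefficient by coefficient: matching the coefficient of $(\Lambda^{T})^{k+1}$ yields the recursion $S^{I}_{k+1}=-\left(S^{I}_{k}R_{1}+S^{I}_{k-1}R_{2}+\cdots+S^{I}_{1}R_{k}+S_{k+1}\right)$ with the block diagonal matrices $R_{i}$ coming from the commutation rule $(\Lambda^{T})^{i}D=R(\Lambda^{T})^{i}$ --- this is exactly the ``triangular recursion'' you sketch only as an alternative in your closing paragraph. Your main argument instead writes $S=I_{nm}+N$, shows via the same commutation rule that the set $\mathcal{F}$ of strictly lower ``diagonal-times-$(\Lambda^{T})^{k}$'' sums is multiplicatively closed with the power index additive, deduces $N^{m}=O_{nm}$ from the nilpotency of $\Lambda^{T}$, and reads off the finite Neumann series $S^{-1}=\sum_{j=0}^{m-1}(-1)^{j}N^{j}$, whose regrouped coefficients are block diagonal. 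What your route buys is a closed-form expression for $S^{-1}$ and a proof of invertibility and of the shape (\ref{B5}) in one stroke, with no induction on $k$ and no appeal to the separately established injectivity of $S$ beyond identifying the two-sided inverse; what the paper's recursion buys is explicit recursive formulas for the individual blocks $S^{I}_{k}$ in terms of $S_{1},\dots,S_{k}$, which is more constructive if one wants the coefficients themselves. Your observation that the hypothesis ``$m\leq 2$'' in the statement should read $m\geq 2$ is also correct; it is a typo in the paper, consistent with the example $(\,I_{nm}+S_{1}\Lambda^{T}\,)^{-1}=I_{nm}-S_{1}\Lambda^{T}$ given just before the proposition.
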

\begin{proof}
Since the usual product of matrices is associative our proof is based on two simple results which are true in any associative algebra:
$1)$ the inverse of an element of the algebra if it exists is unique, $2)$ the inverse on the left of an element, is also the inverse on the right.
Thus, we propose the inverse on the left of (\ref{B4}) in the form (\ref{B5}) and proceed to calculate the coefficients of (\ref{B5}) in a recurring way.
In other words, we suggest to calculate the coefficients $S^{I}_{k}$ of $S^{-1}$ for any $k$ of the following equation
\begin{equation}\label{equationinv}
\left(I_{nm}+S^{I}_{1}\Lambda^{T}+S^{I}_{2}(\Lambda^{T})^{2}+\cdots+S^{I}_{m-1}(\Lambda^{T})^{m-1}\right)
\left(I_{nm}+S_{1}\Lambda^{T}+S_{2}(\Lambda^{T})^{2}+\cdots+S_{m-1}(\Lambda^{T})^{m-1}\right)=I_{nm},
\end{equation}
then first we find $S^{I}_{1}$, after we calculate $S^{I}_{2}$ etc. Concretely it follows that $S^{I}_{1}=-S_{1}$, immediately after the coefficient $S^{I}_{2}$ is obtained from equality
$$S^{I}_{2}(\Lambda^{T})^{2}+S^{I}_{1}\Lambda^{T}S_{1}\Lambda^{T}+S_{2}(\Lambda^{T})^{2}=O_{nm}(\Lambda^{T})^{2},$$
where as before $O_{nm}$ is the $nm\times nm$ null matrix. Indeed, $\Lambda^{T}S_{1}=D\Lambda^{T}$ where $D=Diag (O_{n},(S_{1})_{11},\ldots,(S_{1})_{(m-1)(m-1)})$, hence $S^{I}_{2}=S_{1}D-S_{2}$.
In a similar way, we can calculate $S^{I}_{3}$ from the equation
\begin{equation*}
S^{I}_{3}(\Lambda^{T})^{3}+S^{I}_{2}(\Lambda^{T})^{2}S_{1}\Lambda^{T}+S^{I}_{1}\Lambda^{T}S_{2}(\Lambda^{T})^{2}+S_{3}(\Lambda^{T})^{3}
=O_{nm}(\Lambda^{T})^{3},
\end{equation*}
it shows that $S^{I}_{3}=-(S^{I}_{2}A+S^{I}_{1}B+S_{3})$, where
$$A=Diag (O_{n},O_{n},(S_{1})_{11},\ldots,(S_{1})_{(m-2)(m-2)}),$$ and
$$B=Diag (O_{n},(S_{2})_{11},\ldots,(S_{2})_{(m-1)(m-1)}),$$
hence, $S^{I}_{3}=-((S_{1}D-S_{2})A-S_{1}B+S_{3})=S_{1}(B-D)+S_{2}A-S_{3}$.

Explicitly, we have obtained $S_{1}^{I}$, $S_{2}^{I}$ and $S_{3}^{I}$. We continue the proof by  induction in $k$ where $k< m-1$. Suppose that
we have already calculated $S_{1}^{I}, S_{2}^{I}, S_{3}^{I},\ldots, S_{k}^{I}$ then from (\ref{equationinv}) it follows
\begin{align*}
&S_{k+1}^{I}(\Lambda^{T})^{k+1}+S_{k}^{I}(\Lambda^{T})^{k}S_{1}\Lambda^{T}+S_{k-1}^{I}(\Lambda^{T})^{k-1}S_{2}(\Lambda^{T})^{2}+\cdots+
S_{2}^{I}(\Lambda^{T})^{2}S_{k-1}(\Lambda^{T})^{k-1} \\
&+S_{1}^{I}\Lambda^{T}S_{k}(\Lambda^{T})^{k}+S_{k+1}(\Lambda^{T})^{k+1}=O_{nm}(\Lambda^{T})^{k+1},
\end{align*}
which can be written in the following form
\begin{equation*}
\left ( S_{k+1}^{I} +S_{k}^{I}R_{1}+ S_{k-1}^{I}R_{2}+\cdots + S_{2}^{I}R_{k-1}+S_{1}^{I}R_{k}+S_{k+1} \right )(\Lambda^{T})^{k+1}=O_{nm}(\Lambda^{T})^{k+1},
\end{equation*}
where
\begin{equation*}
  \begin{array}{c}
    R_{1}=Diag (O_{n},\cdots,O_{n},(S_{1})_{11},\cdots,(S_{1})_{(m-k)(m-k)}), \\
    R_{2}=Diag (O_{n},\cdots,O_{n},(S_{2})_{11},\cdots,(S_{2})_{(m-k+1)(m-k+1)}), \\
    \cdots \cdots \cdots \cdots \cdots \cdots \cdots \cdots \\
    R_{k-1}=Diag (O_{n},O_{n},(S_{k-1})_{11},\cdots,(S_{k-1})_{(m-2)(m-2)}), \\
    R_{k}=Diag (O_{n},(S_{k})_{11},\cdots,(S_{k})_{(m-1)(m-1)}), \\
  \end{array}
\end{equation*}
it shows that
\begin{equation*}
S_{k+1}^{I}=-\left (S_{k}^{I}R_{1}+ S_{k-1}^{I}R_{2}+\cdots + S_{2}^{I}R_{k-1}+S_{1}^{I}R_{k}+S_{k+1} \right ).
\end{equation*}

The proposition is proved.
\end{proof}

A matrix $S$ of the form (\ref{B4}) will be called \textbf{dressing matrix}. Thus, the set of all dressing matrices is a group under the usual product of matrices which will be denoted by $\mathfrak{G}_{-}$.

Observe that if $S\in \mathfrak{G}_{-}$ and it satisfies the block matrix linear equation
\begin{equation}\label{B6}
\frac{\partial S}{\partial t_{r}}=-H^{r}_{<}S,\,\,\,\,\,\,r=1,\cdots,m-1,
\end{equation}
where $H=S\Lambda S^{-1}$, then $H$ is solution of the equations (\ref{B3}). Conversely, let $H$ be a solution of (\ref{B3}) then any dressing block
matrix solution of (\ref{B6}) is called \textbf{Sato-Wilson} block matrix corresponding to this $H$. Since we are working in the matrix case and $H^{r}_{<}$ is strictly lower triangular block matrix for $r=1,\cdots,m-1$, such solutions of (\ref{B6}) always exist. In this case, to give a specify Sato-Wilson matrix corresponding to a Lax matrix $H$,
one only must fix an initial condition at the time of solving the equation (\ref{B6}) on the group $\mathfrak{G}_{-}$.

We have

\begin{proposition}Suppose that $H$ is solution of (\ref{B3}) for which $H(0)=\Lambda$, then there is $S\in \mathfrak{G}_{-}$
such that $H=S\Lambda S^{-1}$ where $S$ satisfies (\ref{B6}) subject to the initial condition $S(0)=I_{nm}$.
\end{proposition}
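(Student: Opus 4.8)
The plan is to obtain the dressing matrix by solving the Sato--Wilson equation (\ref{B6}) with a convenient initial datum, and then to check by a one-line differentiation that the resulting matrix conjugates $H$ into $\Lambda$. First I would invoke the existence statement made just before the proposition: since $H$ is a solution of the hierarchy (\ref{B3}) and each $H^{r}_{<}$ is strictly lower triangular, equation (\ref{B6}) admits solutions inside the dressing group $\mathfrak{G}_{-}$. Because $I_{nm}\in\mathfrak{G}_{-}$ (it is the element with $S_{1}=\cdots=S_{m-1}=O_{nm}$), $I_{nm}$ is an admissible initial value, so we may fix the solution $S$ by requiring $S(0)=I_{nm}$; this is exactly the ``one must fix an initial condition on $\mathfrak{G}_{-}$'' remark, specialised to the identity. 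The remaining work is to show that this particular choice forces $H=S\Lambda S^{-1}$.

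The genuinely non-formal points, and the place I would be most careful, concern the passage to the multi-time setting. The overdetermined system (\ref{B6}) in the times $t_{1},\dots,t_{m-1}$ is consistent precisely because, splitting the Lax equations (\ref{B3}) into triangular parts (the same bookkeeping used to ``clarify'' (\ref{B3}) earlier), one obtains the zero-curvature relations $\partial H^{r}_{<}/\partial t_{s}-\partial H^{s}_{<}/\partial t_{r}+[H^{s}_{<},H^{r}_{<}]=O_{nm}$; and the integral curve of (\ref{B6}) through $I_{nm}$ stays in $\mathfrak{G}_{-}$ because $H^{r}_{<}$ lies in the Lie algebra of $\mathfrak{G}_{-}$, i.e. the space of strictly lower triangular block matrices $\sum_{k=1}^{m-1}X_{k}(\Lambda^{T})^{k}$. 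Consequently $S\in\mathfrak{G}_{-}$ along the whole flow, $S^{-1}$ exists, and by Proposition \ref{propreview1} it again has the form (\ref{B5}); both assertions are subsumed in the existence statement cited in the first paragraph, so here I would only record them, not reprove them.

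With $S$ in hand the identification is immediate. Differentiating $SS^{-1}=I_{nm}$ and using (\ref{B6}) gives $\partial S^{-1}/\partial t_{r}=S^{-1}H^{r}_{<}$, and hence, writing $Q:=S^{-1}HS$,
\begin{align*}
\frac{\partial Q}{\partial t_{r}}
&=\frac{\partial S^{-1}}{\partial t_{r}}\,HS+S^{-1}\frac{\partial H}{\partial t_{r}}\,S+S^{-1}H\,\frac{\partial S}{\partial t_{r}}\\
&=S^{-1}\big(H^{r}_{<}H+[H^{r}_{\geq},H]-HH^{r}_{<}\big)S\\
&=S^{-1}\big([H^{r}_{<},H]+[H^{r}_{\geq},H]\big)S=S^{-1}[H^{r},H]S=O_{nm},
\end{align*}
since $H^{r}=H^{r}_{<}+H^{r}_{\geq}$ and any power of $H$ commutes with $H$. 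Thus $Q$ has vanishing partial derivatives on the connected set $I^{m-1}$, so it is constant; evaluating at the origin with $S(0)=I_{nm}$ and $H(0)=\Lambda$ gives $Q\equiv\Lambda$. Therefore $S^{-1}HS=\Lambda$, that is $H=S\Lambda S^{-1}$ with $S\in\mathfrak{G}_{-}$ a solution of (\ref{B6}) normalised by $S(0)=I_{nm}$, which is the assertion. I expect no obstacle beyond the consistency/invariance points of the second paragraph: once $\partial S^{-1}/\partial t_{r}=S^{-1}H^{r}_{<}$ is available, the rest is the short formal computation displayed above.
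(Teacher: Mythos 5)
Your proposal is correct and follows essentially the same route as the paper: take a solution $S\in\mathfrak{G}_{-}$ of the Sato--Wilson equations (\ref{B6}) with $S(0)=I_{nm}$ (whose existence the paper asserts just before the proposition), show that $S^{-1}HS$ has vanishing $t_{r}$-derivatives using (\ref{B3}) and $\partial S^{-1}/\partial t_{r}=S^{-1}H^{r}_{<}$, and evaluate at $t=0$ to conclude $H=S\Lambda S^{-1}$. The only difference is that you spell out the commutator algebra and the multi-time consistency remarks that the paper leaves implicit.
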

\begin{proof}Assume that $H$ satisfies (\ref{B3}) with $H(0)=\Lambda$ and let $S\in \mathfrak{G}_{-}$ be a solution of the equations (\ref{B6}) such that
$S(0)=I_{nm}$. Define $H_{S}=S^{-1}HS$ then
\begin{equation*}
\frac{\partial H_{S}}{\partial t_{k}}=\frac{\partial S^{-1}}{\partial t_{k}} HS+S^{-1}\frac{\partial H}{\partial t_{k}}S+
S^{-1}H\frac{\partial S}{\partial t_{k}}=O_{nm},
\end{equation*}
it shows that $H_{S}$ is a constant matrix. Now $H_{S}(0)=S^{-1}(0)H(0)S(0)=\Lambda$. Hence $H_{S}=\Lambda$ and $H=S\Lambda S^{-1}$.
\end{proof}

In the previous proposition the supposition $H(0)=\Lambda$ can be improved even more to include a larger class of solutions $H$ that could be written in the form $H=S\Lambda S^{-1}$
for some $S\in \mathfrak{G}_{-}$. In fact, we have
\begin{proposition}\label{callproposition}
Suppose that $H$ is solution of (\ref{B3}) such that $H(0)=S_{\iota}\Lambda S_{\iota}^{-1}$ where $S_{\iota}\in \mathfrak{G}_{-}$ is a constant dressing matrix.
Then there exists $S\in \mathfrak{G}_{-}$ such that $H=S\Lambda S^{-1}$, this decomposition is not necessarily unique.
\end{proposition}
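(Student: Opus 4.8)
The plan is to mimic the argument of the previous proposition, but with $\Lambda$ replaced by the conjugated shift $\Lambda_{\iota}:=S_{\iota}\Lambda S_{\iota}^{-1}$ as the ``frozen'' value, and then to undo the conjugation at the end. First I would pick, using equation (\ref{B6}) on the group $\mathfrak{G}_{-}$, a Sato--Wilson matrix $\widetilde{S}\in\mathfrak{G}_{-}$ for $H$ with the normalization $\widetilde{S}(0)=I_{nm}$; such a solution exists and stays in $\mathfrak{G}_{-}$ because $H^{r}_{<}$ is strictly lower triangular (as noted right before the statement). Exactly as in the proof of the preceding proposition, differentiating $H_{\widetilde S}:=\widetilde{S}^{-1}H\widetilde{S}$ and using (\ref{B3}) together with (\ref{B6}) shows $\partial H_{\widetilde S}/\partial t_{r}=O_{nm}$ for every $r$, so $H_{\widetilde S}$ is constant in all the variables $t_{1},\dots,t_{m-1}$; evaluating at the origin gives $H_{\widetilde S}=\widetilde{S}^{-1}(0)H(0)\widetilde{S}(0)=H(0)=S_{\iota}\Lambda S_{\iota}^{-1}$. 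Hence $H=\widetilde{S}\,S_{\iota}\,\Lambda\,S_{\iota}^{-1}\widetilde{S}^{-1}$ for all times.

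It then remains to set $S:=\widetilde{S}S_{\iota}$ and to check that $S\in\mathfrak{G}_{-}$. This is where the only real content lies: $\mathfrak{G}_{-}$ is closed under products by the remark following Proposition~\ref{propreview1} (the dressing matrices form a group), and both $\widetilde{S}$ and the constant $S_{\iota}$ lie in $\mathfrak{G}_{-}$, so the product does too. With that, $H=S\Lambda S^{-1}$ and $S$ is the desired dressing matrix. For the non-uniqueness claim I would exhibit the ambiguity explicitly: if $C\in\mathfrak{G}_{-}$ is any \emph{constant} dressing matrix commuting with $\Lambda$, then $S':=SC$ also lies in $\mathfrak{G}_{-}$ and satisfies $S'\Lambda (S')^{-1}=S\,C\Lambda C^{-1}S^{-1}=S\Lambda S^{-1}=H$; since the stabilizer of $\Lambda$ inside the constant elements of $\mathfrak{G}_{-}$ is non-trivial for $m\ge 3$ (for instance matrices of the form $I_{nm}+c_{1}\Lambda^{T}$ do not commute with $\Lambda$, but one can take constant polynomials in suitable commuting nilpotents — or simply note that $S_{\iota}$ itself was only one choice among many producing the same $H(0)$), the decomposition is not forced to be unique. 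Even more simply, replacing the normalization $\widetilde S(0)=I_{nm}$ by any other constant initial condition in $\mathfrak{G}_{-}$ stabilizing $\Lambda$ yields a different valid $S$.

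I expect the main obstacle to be purely bookkeeping rather than conceptual: one must make sure that the Sato--Wilson matrix $\widetilde{S}$ produced by solving (\ref{B6}) genuinely stays of the form (\ref{B4}) for all $t$ in the domain $I^{m-1}$ — i.e.\ that the flow preserves $\mathfrak{G}_{-}$ — and that the compatibility of the system (\ref{B6}) over $r=1,\dots,m-1$ holds so that a common solution $\widetilde S(t_{1},\dots,t_{m-1})$ exists (this compatibility is precisely the statement that $H$ solves the hierarchy (\ref{B3}), which is our hypothesis). Once those two points are in hand, the rest is the same one-line ``constant conjugation'' computation as in the previous proposition, followed by the group-closure remark; I would not belabor either, citing the discussion preceding the statement for existence of Sato--Wilson matrices and Proposition~\ref{propreview1} (and the ensuing group structure) for closure under products and inverses.
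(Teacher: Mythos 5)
Your existence argument is correct and is essentially the paper's own proof: the paper takes an \emph{arbitrary} Sato--Wilson matrix $S_{a}$ for $H$, observes that $S_{a}^{-1}HS_{a}$ is constant, evaluates at $t=0$, and sets $S=S_{a}S_{a}^{-1}(0)S_{\iota}$; your normalization $\widetilde{S}(0)=I_{nm}$ simply makes $S_{a}S_{a}^{-1}(0)=\widetilde{S}$, so your $S=\widetilde{S}S_{\iota}$ is the same matrix, and the closure of $\mathfrak{G}_{-}$ under products (used tacitly by the paper) is exactly the group structure noted after Proposition~\ref{propreview1}. Your appeal to the text preceding the statement for the existence of a Sato--Wilson matrix with prescribed initial condition is also what the paper does.

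The final paragraph on non-uniqueness, however, does not work. The centralizer of $\Lambda$ in $\mathfrak{G}_{-}$ is trivial: if $C\in\mathfrak{G}_{-}$ and $C\Lambda=\Lambda C$, then comparing blocks one gets $(C\Lambda)_{ij}=C_{i,j-1}$ and $(\Lambda C)_{ij}=C_{i+1,j}$, while the last block row of $\Lambda C$ vanishes; the relations $C_{i,j-1}=C_{i+1,j}$ then push every strictly lower block of $C$ onto a block $C_{m,l}$ with $l<m$, all of which are forced to be $O_{n}$, so $C=I_{nm}$. Consequently there are no nontrivial ``constant polynomials in commuting nilpotents'' inside $\mathfrak{G}_{-}$ stabilizing $\Lambda$; $S_{\iota}$ is in fact the \emph{unique} element of $\mathfrak{G}_{-}$ conjugating $\Lambda$ to $H(0)$; and changing the initial condition of the Sato--Wilson matrix replaces $S_{a}$ by $S_{a}K$ with $K$ constant, which cancels in $S_{a}S_{a}^{-1}(0)$. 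So none of the three mechanisms you propose yields a second decomposition --- indeed the same centralizer computation shows that the conjugating matrix within $\mathfrak{G}_{-}$ is uniquely determined by $H$, so the ``not necessarily unique'' clause cannot be justified along these lines. Note that the paper's own proof establishes only the existence part and offers no argument for that clause, so your proof matches the paper on everything the paper actually proves; just delete or substantially rethink the non-uniqueness discussion.
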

\begin{proof}
Let $S_{a}$ be an arbitrary Sato-Wilson matrix corresponding to $H$ and define as above $H_{S_{a}}=S_{a}^{-1}HS_{a}$. Again we can see that $H_{S_{a}}$ is a constant matrix. Thus,
$H_{S_{a}}=H_{S_{a}}(0)=S_{a}^{-1}(0)H(0)S_{a}(0)$, that is, $H=S_{a}S_{a}^{-1}(0)H(0)S_{a}(0)S_{a}^{-1}$. Hence,
$$H=S_{a}S_{a}^{-1}(0)S_{\iota}\Lambda S_{\iota}^{-1}S_{a}(0)S_{a}^{-1},$$
it is enough to take $S$ as $S=S_{a}S_{a}^{-1}(0)S_{\iota}$.
\end{proof}

\subsection{Borel-Gauss factorization for block matrices and its applications}

The following lemma will be very useful below

\begin{lemma}\label{lema1}
Let $Y$ be a upper triangular block matrix of order $mn\times mn$ such that $Y_{ss}$ is nonsingular for $s=1,\ldots, m$
\begin{equation*}
Y=\left(
                           \begin{array}{ccccc}
                             Y_{11} & Y_{12} & Y_{13} & \cdots & Y_{1m} \\
                             O_{n} & \ddots & \ddots & \ddots & \vdots \\
                             \vdots & \ddots & \ddots & \ddots & Y_{(m-2)m} \\
                             \vdots & \ddots & \ddots & \ddots & Y_{(m-1)m} \\
                             O_{n} & \cdots & \cdots & O_{n} & Y_{mm} \\
                           \end{array}
                         \right),
\end{equation*}
then
\begin{equation}\label{B7}
|Y|=|Y_{11}||Y_{22}|\cdots|Y_{mm}|,
\end{equation}
therefore the matrix $Y$ is nonsingular.
\end{lemma}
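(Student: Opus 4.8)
The plan is to argue by induction on the number of diagonal blocks $m$, using Schur's Formula (\ref{i7}) as the main tool. For $m=1$ the assertion is the tautology $|Y|=|Y_{11}|$, and $Y_{11}$ is nonsingular by hypothesis, so nothing is to be proved. Assume now that the claim holds for every upper triangular block matrix of order $(m-1)n\times(m-1)n$ whose diagonal blocks are nonsingular.

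Given $Y$ of order $mn\times mn$ as in the statement, I would regroup it as a $2\times2$ block matrix
\[
Y=\left(\begin{array}{cc} Y_{11} & \mathcal{Q} \\ O & Y' \end{array}\right),
\]
where $Y_{11}$ is the $n\times n$ top-left diagonal block, $\mathcal{Q}=(Y_{12},\ldots,Y_{1m})$ is the $n\times(m-1)n$ first block-row to the right of $Y_{11}$, the lower-left block is the $(m-1)n\times n$ zero matrix, and $Y'$ is the $(m-1)n\times(m-1)n$ upper triangular block matrix whose diagonal blocks are $Y_{22},\ldots,Y_{mm}$ (each nonsingular by hypothesis). This is legitimate because all entries of $Y$ lying strictly below the $n$-th row and in the first $n$ columns are zero.

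Since $P:=Y_{11}$ is nonsingular, Schur's Formula applies to this partition: $|Y|=|Y_{11}|\cdot|Y/Y_{11}|$, where the Schur complement is $Y/Y_{11}=Y'-O\cdot Y_{11}^{-1}\cdot\mathcal{Q}=Y'$, the vanishing of the lower-left block being exactly what collapses $S-RP^{-1}Q$ to $S$. Hence $|Y|=|Y_{11}|\,|Y'|$. By the induction hypothesis $Y'$ is nonsingular and $|Y'|=|Y_{22}|\cdots|Y_{mm}|$, so $|Y|=|Y_{11}||Y_{22}|\cdots|Y_{mm}|$, which is nonzero since every factor is; thus $Y$ is nonsingular and the induction is complete.

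I do not anticipate a genuine obstacle here; the only point worth a brief remark is that the version of Schur's Formula recalled above is stated for an arbitrary partition of a square matrix into (possibly rectangular) blocks with $P$ nonsingular, which is precisely the generality invoked. An equally short alternative would be to left-multiply $Y$ by a unipotent upper triangular block matrix (determinant $1$) chosen to clear all blocks strictly above the block-diagonal, reducing $Y$ to $\operatorname{diag}(Y_{11},\ldots,Y_{mm})$, whose determinant is visibly $\prod_{s=1}^{m}|Y_{ss}|$; but the Schur-complement route is preferable since that formula is already at our disposal.
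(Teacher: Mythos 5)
Your proposal is correct and follows essentially the same route as the paper: partition off the top-left block $Y_{11}$, apply Schur's Formula (the zero lower-left block making the Schur complement equal to the remaining $(m-1)n\times(m-1)n$ upper triangular block matrix), and induct on the number of diagonal blocks. The only cosmetic difference is your base case $m=1$ versus the paper's explicit $m=2$ and $m=3$ computations via the Schur determinant lemma.
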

\begin{proof}From Schur determinant lemma follows that (\ref{B7}) holds for $m=2$. Indeed, since $Y_{21}=O_{n}$ then $[Y_{11},Y_{21}]=O_{n}$,
where as before $[Y_{11},Y_{21}]$ indicates the Lie Bracket of $Y_{11}$ and $Y_{21}$, thus $|Y|=|Y_{11}Y_{22}-Y_{21}Y_{12}|=|Y_{11}Y_{22}|=|Y_{11}|\,\,|Y_{22}|$.
The result is also true for $m=3$. To see this we use the Schur's formula
$$|Y|=|Y_{11}|\,|(Y/Y_{11})|=|Y_{11}|\left |\begin{array}{cc}
                                              Y_{22} & Y_{23} \\
                                              O_{n} & Y_{33}
                                            \end{array}
\right |=|Y_{11}||Y_{22}||Y_{33}|.$$

We proceed now by induction. Let us suppose the result holds for $m=k$ and prove the statement of the lemma for $m=k+1$. Let $Y$ be a upper triangular block matrix of order $(k+1)n\times (k+1)n$
for which each matrix in the principal diagonal is nonsingular, then
\begin{equation*}
|Y|=|Y_{11}|\,|(Y/Y_{11})|=|Y_{11}|\,\left|
                           \begin{array}{ccccc}
                             Y_{22} & Y_{23} & Y_{24} & \cdots & Y_{2(k+1)} \\
                             O_{n} & \ddots & \ddots & \ddots & \vdots \\
                             \vdots & \ddots & \ddots & \ddots & Y_{(k-1)(k+1)} \\
                             \vdots & \ddots & \ddots & \ddots & Y_{(k)(k+1)} \\
                             O_{n} & \cdots & \cdots & O_{n} & Y_{(k+1)(k+1)} \\
                           \end{array}
                         \right|=|Y_{11}|\,|Y_{22}|\cdots |Y_{(k+1)(k+1)}|.
\end{equation*}

Hence, the lemma is fulfilled also for $m=k+1$. The proof is finished.
\end{proof}

\begin{corollary}All matrices of the form
\begin{equation}\label{B8}
S=\left(
                           \begin{array}{ccccc}
                             I_{n} & O_{n} & \cdots & \cdots & O_{n} \\
                             S_{21} & I_{n} & \ddots & \ddots & \vdots \\
                             \vdots & \ddots & \ddots & \ddots & \vdots \\
                             \vdots & \ddots & \ddots & I_{n} & O_{n} \\
                             S_{m1} & \cdots & \cdots & S_{m(m-1)} & I_{n} \\
                           \end{array}
                         \right),
                         \end{equation}
are nonsingular, even more $|S|=1$.
\end{corollary}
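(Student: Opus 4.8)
The plan is to reduce this immediately to Lemma \ref{lema1}. A matrix $S$ of the form (\ref{B8}) is a lower triangular block matrix whose diagonal blocks are all equal to $I_{n}$, which is nonsingular; hence its transpose $S^{T}$ is an upper triangular block matrix of order $mn\times mn$ with $(S^{T})_{ss}=I_{n}$ for $s=1,\ldots,m$, so $S^{T}$ satisfies the hypotheses of Lemma \ref{lema1}. Applying (\ref{B7}) to $S^{T}$ gives $|S^{T}|=|I_{n}|\,|I_{n}|\cdots|I_{n}|=1$, and since the determinant is invariant under transposition, $|S|=|S^{T}|=1$. In particular $|S|\neq 0$, so $S$ is nonsingular.

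Alternatively, one can argue directly by the same induction used in Lemma \ref{lema1}, replacing the upper triangular shape by the lower triangular one: for $m=2$ the Schur determinant lemma yields $|S|=|I_{n}\cdot I_{n}-S_{21}\cdot O_{n}|=|I_{n}|=1$ (here $P=I_{n}$ and $R=S_{21}$ trivially commute since $P=I_{n}$), and for the inductive step one peels off the last diagonal block $S_{mm}=I_{n}$ via Schur's Formula, reducing to an $(m-1)n\times(m-1)n$ matrix of the same type. Either route works; the first is shorter because it reuses the lemma verbatim.

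There is essentially no obstacle here: the only point worth a sentence is the verification that transposition sends the lower triangular block shape of (\ref{B8}) to the upper triangular block shape required by Lemma \ref{lema1}, and that $|S|=|S^{T}|$; both are standard facts about block-triangular matrices. The computation of the product of the diagonal determinants is then immediate because every diagonal block is the identity.
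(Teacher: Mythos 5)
Your proposal is correct and follows the paper's own argument: the paper likewise proves the corollary by noting $|S|=|S^{T}|$ and applying Lemma \ref{lema1} to the transposed (upper triangular) block matrix, whose diagonal blocks are all $I_{n}$. The alternative direct induction you sketch is fine but unnecessary; the transposition route is exactly what the paper does.
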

\begin{proof}Taking into account that $|S|=|S^{T}|$ the result is followed by previous lemma.
\end{proof}

We give the following definition
\begin{definition}We say that a nonsingular block matrix $U$ of order $mn\times mn$ admits a Borel-Gauss factorization if
\small
\begin{align} \label{B9}
&\left(
                           \begin{array}{ccccc}
                             U_{11} & U_{12} & \cdots & \cdots & U_{1m} \\
                             U_{21} & \ddots & \ddots & \ddots & \vdots \\
                             \vdots & \ddots & \ddots & \ddots & \vdots \\
                             \vdots & \ddots & \ddots & \ddots & U_{(m-1)m} \\
                             U_{m1} & \cdots & \cdots & U_{m(m-1)} & U_{mm} \\
                           \end{array}
                         \right) \\ \nonumber
&=\left(
                           \begin{array}{ccccc}
                             I_{n} & O_{n} & \cdots & \cdots & O_{n} \\
                             S_{21} & I_{n} & \ddots & \ddots & \vdots \\
                             \vdots & \ddots & \ddots & \ddots & \vdots \\
                             \vdots & \ddots & \ddots & I_{n} & O_{n} \\
                             S_{m1} & \cdots & \cdots & S_{m(m-1)} & I_{n} \\
                           \end{array}
                         \right)\left(
                           \begin{array}{ccccc}
                             Y_{11} & Y_{12} & Y_{13} & \cdots & Y_{1m} \\
                             O_{n} & \ddots & \ddots & \ddots & \vdots \\
                             \vdots & \ddots & \ddots & \ddots & Y_{(m-2)m} \\
                             \vdots & \ddots & \ddots & \ddots & Y_{(m-1)m} \\
                             O_{n} & \cdots & \cdots & O_{n} & Y_{mm} \\
                           \end{array}
                         \right),
\end{align}

where $|Y_{kk}|\neq 0$, for $k=1,2,\cdots, m$.
\end{definition}
\normalsize

Denote
\begin{equation}
\Delta_{k}(U)=\left|
              \begin{array}{ccc}
                U_{11} & \cdots & U_{1k} \\
                \vdots & \ddots & \vdots \\
                U_{k1} & \cdots & U_{kk} \\
              \end{array}
            \right|,
\end{equation}
for $k=1,2,\cdots,m$. The square sub matrices giving place to the determinants $\Delta_{k}(U)$ are called the main minors of the matrix $U$ and they are denoted by $M_{k}(U)$. Thus, $\Delta_{k}(U)=|M_{k}(U)|$. We have
\begin{theorem}Let us suppose that $\Delta_{k}(U)\neq 0$ for $k=1,2,\ldots,m$ where $m\geq 2$, then $U$ admits a factorization of Borel-Gauss type.
\end{theorem}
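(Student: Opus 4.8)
The plan is to prove the existence of the Borel--Gauss factorization $U = SY$ by induction on $m$, mimicking exactly the recursive structure used in Lemma~\ref{lema1} and in the proof of Proposition~\ref{propreview1}, but now peeling off one block row and one block column at a time via the Schur complement. The base case $m=2$ is direct: if $\Delta_{1}(U)=|U_{11}|\neq 0$, then $U_{11}$ is invertible, and one simply sets $S_{21}=U_{21}U_{11}^{-1}$, $Y_{11}=U_{11}$, $Y_{12}=U_{12}$, and $Y_{22}=U_{22}-U_{21}U_{11}^{-1}U_{12}=U/U_{11}$. The hypothesis $\Delta_{2}(U)=|U|\neq 0$ together with Schur's Formula (\ref{i7}) gives $|Y_{22}| = |U|/|U_{11}| \neq 0$, so the diagonal block condition $|Y_{kk}|\neq 0$ is met.

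For the inductive step, I would assume the theorem holds for all sizes up to $m-1$ and let $U$ be an $mn\times mn$ block matrix with $\Delta_{k}(U)\neq 0$ for $k=1,\ldots,m$. Since $\Delta_{1}(U)=|U_{11}|\neq 0$, the block $U_{11}$ is invertible. Write $U$ in $2\times 2$ block-of-blocks form with top-left corner $U_{11}$ ($n\times n$), top-right $\widehat{U}_{12}$ (the $n\times (m-1)n$ strip), bottom-left $\widehat{U}_{21}$, and bottom-right $\widehat{U}_{22}$ (the $(m-1)n\times(m-1)n$ block). Then
\begin{equation*}
U=\left(\begin{array}{cc} I_{n} & O \\ \widehat{U}_{21}U_{11}^{-1} & I_{(m-1)n}\end{array}\right)\left(\begin{array}{cc} U_{11} & \widehat{U}_{12} \\ O & W\end{array}\right),\qquad W=\widehat{U}_{22}-\widehat{U}_{21}U_{11}^{-1}\widehat{U}_{12}=U/U_{11}.
\end{equation*}
The key point is then to show that $W$, as an $(m-1)n\times(m-1)n$ block matrix, itself satisfies the hypotheses of the theorem in size $m-1$, i.e.\ that each of its leading block minors $\Delta_{j}(W)$ for $j=1,\ldots,m-1$ is nonzero. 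This follows from Schur's Formula applied to the $(j+1)n\times(j+1)n$ leading principal submatrix $M_{j+1}(U)$ of $U$: since the leading $(j+1)$ block rows and columns of the displayed factorization restrict to a factorization of $M_{j+1}(U)$ with bottom-right corner equal to $M_{j}(W)$, we get $\Delta_{j}(W)=|M_{j}(W)| = \Delta_{j+1}(U)/\Delta_{1}(U)\neq 0$. By the induction hypothesis $W$ admits a Borel--Gauss factorization $W=\widetilde{S}\widetilde{Y}$ with $\widetilde{S}$ lower unitriangular, $\widetilde{Y}$ upper triangular, and all diagonal blocks of $\widetilde{Y}$ invertible. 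Substituting back, $U = \left(\begin{smallmatrix} I_{n} & O \\ \widehat{U}_{21}U_{11}^{-1} & I\end{smallmatrix}\right)\left(\begin{smallmatrix} I_{n} & O \\ O & \widetilde{S}\end{smallmatrix}\right)\left(\begin{smallmatrix} U_{11} & \widehat{U}_{12} \\ O & \widetilde{Y}\end{smallmatrix}\right)$, and the product of the first two factors is lower unitriangular while the third is upper triangular with diagonal blocks $U_{11},$ then those of $\widetilde{Y}$, all invertible. This is the desired factorization, completing the induction.

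The main obstacle is the bookkeeping in the middle step: verifying cleanly that the leading block minors of the Schur complement $W$ inherit nonvanishing from those of $U$. The cleanest route is to avoid recomputing $W$ explicitly and instead observe that Schur's Formula (\ref{i7}), applied to each leading principal submatrix $M_{j+1}(U)$ (whose own top-left block is still $U_{11}$, nonsingular), gives $\det(M_{j+1}(U)/U_{11}) = \Delta_{j+1}(U)/\Delta_{1}(U)$, and that $M_{j+1}(U)/U_{11}$ is precisely $M_{j}(W)$ because forming the Schur complement with respect to $U_{11}$ commutes with passing to leading principal block submatrices. One should also remark, as in the corollary following Lemma~\ref{lema1}, that the lower unitriangular factor $S$ obtained this way automatically has $|S|=1$, and that the $Y_{1k}$ blocks are read off directly as $Y_{11}=U_{11}$, $(Y_{12},\ldots,Y_{1m})=\widehat{U}_{12}$; uniqueness of the factorization, if desired, follows from the corollary (a lower unitriangular matrix equal to an upper triangular one must be the identity) but is not needed for the existence statement.
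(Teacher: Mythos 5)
Your proof is correct, but it runs the induction in the opposite direction from the paper, and the two arguments are genuinely different in structure. The paper keeps the block size fixed at the top-left: it assumes a Borel--Gauss factorization of the leading minor $M_{m}(U)$ and then \emph{borders} it, solving the triangular linear systems (\ref{B11})--(\ref{B12}) for the new blocks $S_{(m+1)k}$ and $Y_{k(m+1)}$ (this is where the corollary to Lemma \ref{lema1}, giving invertibility of the unitriangular and triangular factors, is used), and finally identifies the new corner as $Y_{(m+1)(m+1)}=M_{m+1}(U)/M_{m}(U)$ so that Schur's formula (\ref{i7}) yields $|Y_{(m+1)(m+1)}|=\Delta_{m+1}/\Delta_{m}\neq 0$; the hypothesis transfer is trivial there because the leading minors of $M_{m}(U)$ are leading minors of $U$, and as a by-product one gets the explicit recursion and the identity (\ref{B13}) for all diagonal blocks. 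You instead peel off the first block row and column by one step of block $LU$ elimination, $U=\bigl(\begin{smallmatrix} I_{n} & O \\ \widehat{U}_{21}U_{11}^{-1} & I\end{smallmatrix}\bigr)\bigl(\begin{smallmatrix} U_{11} & \widehat{U}_{12} \\ O & W\end{smallmatrix}\bigr)$ with $W=U/U_{11}$, and recurse on the Schur complement; here the construction is immediate and the work shifts to the hypothesis transfer, namely your observation that $M_{j}(W)=M_{j+1}(U)/U_{11}$ (Schur complementation with respect to $U_{11}$ commutes with taking leading principal block submatrices, or equivalently use (\ref{B10})), whence $\Delta_{j}(W)=\Delta_{j+1}(U)/\Delta_{1}(U)\neq 0$ by (\ref{i7}). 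That verification is stated and justified correctly, and the reassembly into a unitriangular times upper triangular product with invertible diagonal blocks $U_{11},\widetilde{Y}_{11},\ldots$ is sound, so there is no gap. What each approach buys: the paper's bordering argument produces explicit recursive formulas for every block of $S$ and $Y$ and the determinant identity (\ref{B13}) directly, which is reused later; your Schur-complement recursion is shorter and more conceptual as a pure existence proof (essentially block $LU$ with pivots $M_{k}(U)/M_{k-1}(U)$), and recovers (\ref{B13}) only implicitly by unwinding the recursion.
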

\begin{proof}As before we do the proof by induction. Suppose that $m=2$, in this case we must prove that there exists $S$ lower triangular block matrix of order $2n\times 2n$ and $Y$ upper triangular block matrix of the same order such that
\begin{equation*}
U=\left(
    \begin{array}{cc}
      U_{11} & U_{12} \\
      U_{21} & U_{22} \\
    \end{array}
  \right)=\left(
            \begin{array}{cc}
              I_{n} & O_{n} \\
              S_{21} & I_{n} \\
            \end{array}
          \right)\left(
                   \begin{array}{cc}
                     Y_{11} & Y_{12} \\
                     O_{n} & Y_{22} \\
                   \end{array}
                 \right),
\end{equation*}
where $U$ is a matrix for which $\Delta_{1}(U)=|M_{1}(U)|=|U_{11}|\neq 0$ and $\Delta_{2}(U)=|M_{2}(U)|=|U|\neq 0$. The following calculation is well known: $Y_{11}=U_{11}$ and $Y_{12}=U_{12}$. Moreover,
$S_{21}=U_{21}U_{11}^{-1}$ and finally $Y_{22}=U_{22}-U_{21}U_{11}^{-1}U_{12}=M_{2}(U)/M_{1}(U)$. It shows that $|Y_{22}|\neq 0$. In fact, from Schur's formula
\begin{equation*}
|Y_{22}|=|M_{2}(U)/M_{1}(U)|=\frac{|M_{2}(U)|}{|M_{1}(U)|}=\frac{\Delta_{2}}{\Delta_{1}}=\frac{|U|}{|U_{11}|}\neq 0.
\end{equation*}

We would like to calculate the Borel-Gauss factorization for a matrix $U$ of order $3n\times 3n$ such that $\Delta_{3}$, $\Delta_{2}$ and $\Delta_{1}$ are different from zero. But before this, observe that one can calculate the entries of the matrices $S$ and $Y$ in a recurring way (from the inside out) taking into account
\begin{equation}\label{B10}
M_{k}(U)=M_{k}(S)M_{k}(Y),
\end{equation}
for $k=1,2,\cdots,m$. Next, we do the computation for $m=3$, that is, we must have
\begin{equation*}
U=\left(
    \begin{array}{ccc}
      U_{11} & U_{12} & U_{13} \\
      U_{21} & U_{22} & U_{23} \\
      U_{31} & U_{32} & U_{33}
    \end{array}
  \right)=\left(
            \begin{array}{ccc}
              I_{n} & O_{n} & O_{n} \\
              S_{21} & I_{n} & O_{n} \\
              S_{31} & S_{32} & I_{n}
            \end{array}
          \right)\left(
                   \begin{array}{ccc}
                     Y_{11} & Y_{12} & Y_{13} \\
                     O_{n} & Y_{22} & Y_{23} \\
                     O_{n} & O_{n} & Y_{33}
                   \end{array}
                 \right),
\end{equation*}
under the supposition that $M_{1}(U)$, $M_{2}(U)$ and $M_{3}(U)$ are nonsingular square matrices. Since $M_{2}(U)=M_{2}(S)M_{2}(Y)$ then we already know how to calculate the entries of $M_{2}(S)$ and $M_{2}(Y)$ from the entries of $M_{2}(U)$. Thus,
\begin{equation*}
Y_{11}=U_{11},\,\,Y_{12}=U_{12},\,\,Y_{22}=U_{22}-U_{21}U_{11}^{-1}U_{12},\,\,S_{21}=U_{21}U_{11}^{-1}
\end{equation*}
and so $|Y_{11}|\neq 0$ and $|Y_{22}|\neq 0$. On other hand, recalling that $|Y_{22}|=|U_{22}-U_{21}U_{11}^{-1}U_{12}|\neq 0$
\begin{align}
&Y_{13}=U_{13},\,\,Y_{23}=U_{23}-U_{21}U_{11}^{-1}U_{13},\,\,S_{31}=U_{31}U_{11}^{-1},\,\,S_{32}=(U_{32}-U_{31}U_{11}^{-1}U_{12})
(U_{22}-U_{21}U_{11}^{-1}U_{12})^{-1}, \nonumber \\
&Y_{33}=(U_{33}-U_{31}U_{11}^{-1}U_{13}) -(U_{32}-U_{31}U_{11}^{-1}U_{12})(U_{22}-U_{21}U_{11}^{-1}U_{12})^{-1}(U_{23}-U_{21}U_{11}^{-1}U_{13}).  \nonumber
\end{align}

We claim that
\begin{equation*}
Y_{33}=M_{3}(U)/M_{2}(U)=U_{33}-\left(
                                  \begin{array}{cc}
                                    U_{31} & U_{32} \\
                                  \end{array}
                                \right)\left(
                                         \begin{array}{cc}
                                           U_{11} & U_{12} \\
                                           U_{21} & U_{22} \\
                                         \end{array}
                                       \right)^{-1}\left(
                                                     \begin{array}{c}
                                                       U_{13} \\
                                                       U_{23} \\
                                                     \end{array}
                                                   \right),
\end{equation*}
indeed, a simple calculation shows that
\begin{equation*}
\left(
                                         \begin{array}{cc}
                                           U_{11} & U_{12} \\
                                           U_{21} & U_{22} \\
                                         \end{array}
                                       \right)^{-1}=\left(
                                                      \begin{array}{cc}
                                                        U_{11}^{-1}+U_{11}^{-1}U_{12}(U_{22}-U_{21}U_{11}^{-1}U_{12})^{-1}U_{21}U_{11}^{-1} & -U_{11}^{-1}U_{12}(U_{22}-U_{21}U_{11}^{-1}U_{12})^{-1} \\
                                                        -(U_{22}-U_{21}U_{11}^{-1}U_{12})^{-1}U_{21}U_{11}^{-1} &  (U_{22}-U_{21}U_{11}^{-1}U_{12})^{-1} \\
                                                      \end{array}
                                                    \right),
\end{equation*}
thus
\small
\begin{align*}
&\left(
                                         \begin{array}{cc}
                                           U_{11} & U_{12} \\
                                           U_{21} & U_{22} \\
                                         \end{array}
                                       \right)^{-1}\left(
                                                     \begin{array}{c}
                                                       U_{13} \\
                                                       U_{23} \\
                                                     \end{array}
                                                   \right) \\ &=\left(
                                                      \begin{array}{c}
                                                        U_{11}^{-1}U_{13}+U_{11}^{-1}U_{12}(U_{22}-U_{21}U_{11}^{-1}U_{12})^{-1}U_{21}U_{11}^{-1}U_{13} -U_{11}^{-1}U_{12}(U_{22}-U_{21}U_{11}^{-1}U_{12})^{-1}U_{23} \\
                                                        -(U_{22}-U_{21}U_{11}^{-1}U_{12})^{-1}U_{21}U_{11}^{-1}U_{13}+
                                                        (U_{22}-U_{21}U_{11}^{-1}U_{12})^{-1}U_{23} \\
                                                      \end{array}
                                                    \right), \\
\end{align*}
\normalsize
from here it is easy to prove the affirmation. It implies
\begin{equation*}
|Y_{33}|=|M_{3}(U)/M_{2}(U)|=\frac{|M_{3}(U)|}{|M_{2}(U)|}=\frac{\Delta_{3}}{\Delta_{2}}\neq 0.
\end{equation*}

Suppose the theorem holds for $k=m$ and let us show that this is also true for $k=m+1$. Let $U$ be a block matrix of order $(m+1)n\times (m+1)n$ such that $\Delta_{1}\neq 0,\Delta_{2}\neq 0,\cdots,\Delta_{m}\neq 0,\Delta_{m+1}\neq 0$, then by the induction hypothesis $M_{m}(U)$ admits a Borel-Gauss factorization, that is, $M_{m}(U)=S_{m}(U)Y_{m}(U)$ and $Y_{m}(U)$ having its main diagonal composed of non-singular matrices. Denote
\begin{equation*}
S_{m}(U)=\left(
                           \begin{array}{ccccc}
                             I_{n} & O_{n} & \cdots & \cdots & O_{n} \\
                             S_{21} & I_{n} & \ddots & \ddots & \vdots \\
                             \vdots & \ddots & \ddots & \ddots & \vdots \\
                             \vdots & \ddots & \ddots & I_{n} & O_{n} \\
                             S_{m1} & \cdots & \cdots & S_{m(m-1)} & I_{n} \\
                           \end{array}
                         \right),\,\,\,Y_{m}(U)=\left(
                           \begin{array}{ccccc}
                             Y_{11} & Y_{12} & Y_{13} & \cdots & Y_{1m} \\
                             O_{n} & \ddots & \ddots & \ddots & \vdots \\
                             \vdots & \ddots & \ddots & \ddots & Y_{(m-2)m} \\
                             \vdots & \ddots & \ddots & \ddots & Y_{(m-1)m} \\
                             O_{n} & \cdots & \cdots & O_{n} & Y_{mm} \\
                           \end{array}
                         \right).
\end{equation*}

Then we can find the matrices
$$S_{(m+1)1}, S_{(m+1)2},\cdots,S_{(m+1)(m-1)},S_{(m+1)m}$$
and
$$Y_{1(m+1)},Y_{2(m+1)},\cdots, Y_{m(m+1)},Y_{(m+1)(m+1)}$$
such that
\begin{equation*}
U=\left(
                           \begin{array}{cccccc}
                             I_{n} & O_{n} & \cdots & \cdots & O_{n} & O_{n} \\
                             S_{21} & I_{n} & \ddots & \ddots & \ddots & \vdots\\
                             \vdots & \ddots & \ddots & \ddots & \ddots & \vdots \\
                             \vdots & \ddots & \ddots & I_{n} & O_{n} & \vdots \\
                             S_{m1} & \cdots & \cdots & S_{m(m-1)} & I_{n} & O_{n} \\
                             S_{(m+1)1} & \cdots & \cdots & S_{(m+1)(m-1)} & S_{(m+1)m} & I_{n}
                           \end{array}
                         \right)\left(
                           \begin{array}{cccccc}
                             Y_{11} & Y_{12} & Y_{13} & \cdots & Y_{1m} & Y_{1(m+1)} \\
                             O_{n} & \ddots & \ddots & \ddots & \ddots & \vdots \\
                             \vdots & \ddots & \ddots & \ddots & \ddots & \vdots \\
                             \vdots & \ddots & \ddots & \ddots & \ddots & Y_{(m-1)(m+1)} \\
                             \vdots & \cdots & \cdots & \cdots & Y_{mm} & Y_{m(m+1)} \\
                             O_{n} & \cdots & \cdots & \cdots & O_{n} & Y_{(m+1)(m+1)}
                           \end{array}
                         \right).
\end{equation*}

In fact, we have
\begin{equation}\label{B11}
\left(
  \begin{array}{c}
    Y_{1(m+1)} \\
     Y_{2(m+1)} \\
    \vdots \\
    Y_{(m-1)(m+1)} \\
    Y_{m(m+1)} \\
  \end{array}
\right)=\left(
                           \begin{array}{ccccc}
                             I_{n} & O_{n} & \cdots & \cdots & O_{n} \\
                             S_{21} & I_{n} & \ddots & \ddots & \vdots \\
                             \vdots & \ddots & \ddots & \ddots & \vdots \\
                             \vdots & \ddots & \ddots & I_{n} & O_{n} \\
                             S_{m1} & \cdots & \cdots & S_{m(m-1)} & I_{n} \\
                           \end{array}
                         \right)^{-1}\left(
  \begin{array}{c}
    U_{1(m+1)} \\
     U_{2(m+1)} \\
    \vdots \\
    U_{(m-1)(m+1)} \\
    U_{m(m+1)} \\
  \end{array}
\right),
\end{equation}
and
\begin{align}\label{B12}
&\left(
  \begin{array}{ccc}
    S_{(m+1)1}\,\,S_{(m+1)2}\,\, \cdots\,\,  S_{(m+1)m} \\
  \end{array}
\right) \nonumber \\ &=\left(
  \begin{array}{ccccc}
    U_{(m+1)1}\,\, U_{(m+1)2}\,\, \cdots\,\, U_{(m+1)m} \\
  \end{array}
\right)\left(
                           \begin{array}{ccccc}
                             Y_{11} & Y_{12} & Y_{13} & \cdots & Y_{1m} \\
                             O_{n} & \ddots & \ddots & \ddots & \vdots \\
                             \vdots & \ddots & \ddots & \ddots & Y_{(m-2)m} \\
                             \vdots & \ddots & \ddots & \ddots & Y_{(m-1)m} \\
                             O_{n} & \cdots & \cdots & O_{n} & Y_{mm} \\
                           \end{array}
                         \right)^{-1}.
\end{align}

Finally, notice that necessarily
\small
\begin{equation*}
Y_{(m+1)(m+1)}=U_{(m+1)(m+1)}-\sum_{k=1}^{m}S_{(m+1)k}Y_{k(m+1)},
\end{equation*}
\normalsize
and therefore
\begin{equation*}
Y_{(m+1)(m+1)}=M_{m+1}(U)/M_{m}(U),
\end{equation*}
here, we have used (\ref{B11}) and (\ref{B12}). It shows that
$$|Y_{(m+1)(m+1)}|=\frac{|M_{m+1}(U)|}{|M_{m}(U)|}=\frac{\Delta_{(m+1)}}{\Delta_{m}}\neq 0.$$

We conclude the proof of the theorem.
\end{proof}

\begin{remark}Observe that if $U$ is a block matrix of order $mn\times mn$ which admits a Borel-Gauss factorization then
necessarily $\Delta_{k}(U)\neq 0$ for $k=1,2,\cdots,m$.
\end{remark}

\begin{remark}Let $U$ be a block matrix of order $mn\times mn$ admitting a Borel-Gauss factorization, then $Y_{11}=U_{11}$ and
\begin{equation}\label{B13}
Y_{kk}=M_{k}(U)/M_{k-1}(U),
\end{equation}
for $k=2,\cdots,m$.
\end{remark}

We must mention that the theme of the Borel-Gauss factorization for semi-infinite moments block matrices was investigated in the papers
\cite{alvarez} and \cite{ariznabarrete}. Let us return to our study of the hierarchy (\ref{B3}). The following result is fundamental in
the study of the block matrix finite discrete KP hierarchy.

\begin{lemma}\label{otro llamado}
Suppose that $U(t_{1},\cdots,t_{m-1})$ admits a Borel-Gauss factorization $U=S_{U}^{-1}Y_{U}$ such that $S_{U}$ and $Y_{U}$ satisfy
the linear equations
\begin{equation}\label{B14}
\frac{\partial S_{U}}{\partial t_{k}}=-H^{k}_{<}S_{U},\,\,\,\,\,\,\frac{\partial Y_{U}}{\partial t_{k}}=H^{k}_{\geq}Y_{U},\,\,\,\,\,\,\,\,k=1,2,\cdots,m-1,
\end{equation}
where $H=S_{U}\Lambda S_{U}^{-1}$, with initial conditions $S_{U}(0)=I_{nm}$ and $Y_{U}(0)=I_{nm}$ (observe that $S_{U}\in\mathfrak{G}_{-}$).
Then, $U=e^{\sum_{k=1}^{m-1}\Lambda^{k}t_{k}}$.
\end{lemma}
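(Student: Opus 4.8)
The plan is to prove that $U$ itself is a solution of the overdetermined linear system $\partial U/\partial t_k=\Lambda^{k}U$, $k=1,\dots,m-1$, subject to $U(0)=I_{nm}$, and then to identify $U$ by uniqueness.

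First I would differentiate the factorization $U=S_{U}^{-1}Y_{U}$ with respect to $t_k$. Using the identity $\partial_{t_k}(S_{U}^{-1})=-S_{U}^{-1}(\partial_{t_k}S_{U})S_{U}^{-1}$ together with the first equation of (\ref{B14}), one gets $\partial_{t_k}(S_{U}^{-1})=S_{U}^{-1}H^{k}_{<}$. Combining this with the second equation of (\ref{B14}),
\begin{equation*}
\frac{\partial U}{\partial t_k}=\frac{\partial S_{U}^{-1}}{\partial t_k}Y_{U}+S_{U}^{-1}\frac{\partial Y_{U}}{\partial t_k}=S_{U}^{-1}H^{k}_{<}Y_{U}+S_{U}^{-1}H^{k}_{\geq}Y_{U}=S_{U}^{-1}\left(H^{k}_{<}+H^{k}_{\geq}\right)Y_{U}=S_{U}^{-1}H^{k}Y_{U},
\end{equation*}
because every square matrix is the sum of its strictly lower triangular part and its upper (including-diagonal) part. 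Then, since $H=S_{U}\Lambda S_{U}^{-1}$ implies $H^{k}=S_{U}\Lambda^{k}S_{U}^{-1}$, the right-hand side becomes $S_{U}^{-1}S_{U}\Lambda^{k}S_{U}^{-1}Y_{U}=\Lambda^{k}S_{U}^{-1}Y_{U}=\Lambda^{k}U$. Hence $\partial U/\partial t_k=\Lambda^{k}U$ for all $k$, and the initial condition is $U(0)=S_{U}^{-1}(0)Y_{U}(0)=I_{nm}$.

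Finally, I would observe that the coefficient matrices $\Lambda^{k}$ are powers of the single nilpotent matrix $\Lambda$, so they pairwise commute; consequently the system is compatible (the mixed derivatives $\partial_{t_j}\partial_{t_k}U=\Lambda^{j+k}U$ are symmetric in $j,k$) and admits a unique solution with the prescribed value at the origin. Since $\exp\!\left(\sum_{k=1}^{m-1}\Lambda^{k}t_k\right)=\prod_{k=1}^{m-1}e^{\Lambda^{k}t_k}$ is a finite product of commuting exponentials, each of them a polynomial in $t_k$ because $\Lambda^{m}=O_{nm}$, and this product visibly satisfies all $m-1$ equations and equals $I_{nm}$ at $t=0$, uniqueness yields $U=\exp\!\left(\sum_{k=1}^{m-1}\Lambda^{k}t_k\right)$.

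The only delicate point is the uniqueness/compatibility of this overdetermined system; I would handle it either by the Frobenius argument sketched above, or more elementarily by restricting to a ray $t=\tau v$ and invoking uniqueness for the resulting linear ODE in $\tau$, which together with the compatibility of the $m-1$ equations pins $U$ down on a whole neighborhood of the origin. Everything else is the routine matrix calculus carried out above.
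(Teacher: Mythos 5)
Your proposal is correct and follows essentially the same route as the paper: differentiate $U=S_{U}^{-1}Y_{U}$, use (\ref{B14}) to get $\partial U/\partial t_{k}=S_{U}^{-1}H^{k}Y_{U}=\Lambda^{k}U$ with $U(0)=I_{nm}$, and conclude $U=e^{\sum_{k}\Lambda^{k}t_{k}}$. The only difference is that you spell out the compatibility/uniqueness of the overdetermined linear system (via commutativity of the nilpotent powers $\Lambda^{k}$), a step the paper leaves implicit.
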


\begin{proof}First of all, we have $U(0)=I_{nm}$. On other hand
\begin{align*}
\frac{\partial U}{\partial t_{k}}&=\frac{\partial S_{U}^{-1}}{\partial t_{k}}Y_{U}+S_{U}^{-1}\frac{\partial Y_{U}}{\partial t_{k}}
=-S_{U}^{-1}\frac{\partial S_{U}}{\partial t_{k}}S_{U}^{-1}Y_{U}+S_{U}^{-1}H^{k}_{\geq}Y_{U}=S_{U}^{-1}H^{k}_{<}Y_{U}+S_{U}^{-1}H^{k}_{\geq}Y_{U} \\
&=S_{U}^{-1}H^{k}Y_{U}=S_{U}^{-1}H^{k}S_{U}U=\Lambda^{k}U,
\end{align*}
it implies that $U=e^{\sum_{k=1}^{m-1}\Lambda^{k}t_{k}}$.
\end{proof} \\

We have that $U=e^{\sum_{k=1}^{m-1}\Lambda^{k}t_{k}}$ is a block upper triangular matrix, denote this matrix function for $Y_{E}(t_{1},\cdots,t_{m-1})$.

From now on, the set of all matrices of order $mn \times mn$ admitting a Borel-Gauss factorization will be denoted for $\mathfrak{BG}(mn)$. In general, if
$U\in \mathfrak{BG}(mn)$ such that $U=S_{U}^{-1}Y_{U}$, for which $S_{U}$ and $Y_{U}$ satisfy the equations (\ref{B14}) where $H=S_{U}\Lambda S_{U}^{-1}$, then
\begin{equation*}
U=e^{\sum_{k=1}^{m-1}\Lambda^{k}t_{k}}U(0)=e^{\sum_{k=1}^{m-1}\Lambda^{k}t_{k}}S_{U}^{-1}(0)Y_{U}(0)=Y_{E}(t_{1},\cdots,t_{m-1})S_{U}^{-1}(0)Y_{U}(0).
\end{equation*}

Observe that
\begin{equation*}
\mathfrak{BG}(mn)=\left\{ S_{1}^{-1}J (S_{2}^{-1})^{T}| S_{1}^{-1},S_{2}^{-1}\in\mathfrak{G}_{-}, J\in \mathfrak{D}(mn)\right\},
\end{equation*}
where $\mathfrak{D}(mn)$ is the space of all block diagonal matrices $J$ such that $\left | J_{ii}\right|\neq 0$ for $i=1,\ldots, m$.

\begin{theorem}\label{teoimp1}
Let us suppose that $U\in \mathfrak{BG}(mn)$ and $U=S_{U}^{-1}Y_{U}$ its Gauss-Borel factorization, such that
\begin{equation}\label{B15}
\frac{\partial S_{U}}{\partial t_{k}}=-H^{k}_{<}S_{U},
\end{equation}
for $k=1,2,\cdots,m-1$, where $H=S_{U}\Lambda S_{U}^{-1}$. Then $Y_{U}$ satisfies the equations
\begin{equation}\label{B16}
\frac{\partial Y_{U}}{\partial t_{k}}=H^{k}_{\geq}Y_{U},\,\,\,\,\,\,\,\,k=1,2,\cdots,m-1,
\end{equation}
if and only if, $U$ satisfies
\begin{equation}\label{B17}
\frac{\partial U}{\partial t_{k}}=\Lambda^{k}U,
\end{equation}
for $k=1,2,\cdots,m-1$.
\end{theorem}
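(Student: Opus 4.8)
The plan is to prove the equivalence by differentiating the factorization $U=S_U^{-1}Y_U$ and comparing the two candidate evolution equations, exactly in the spirit of the computation already carried out in Lemma \ref{otro llamado}. The only genuine input is that the evolution of $S_U$ is already fixed by (\ref{B15}), so the whole question reduces to tracking what that forces on the product.

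First I would differentiate $U=S_U^{-1}Y_U$ with respect to $t_k$, using $\frac{\partial S_U^{-1}}{\partial t_k}=-S_U^{-1}\frac{\partial S_U}{\partial t_k}S_U^{-1}$ together with (\ref{B15}). This gives
\begin{equation*}
\frac{\partial U}{\partial t_k}=-S_U^{-1}\frac{\partial S_U}{\partial t_k}S_U^{-1}Y_U+S_U^{-1}\frac{\partial Y_U}{\partial t_k}
=S_U^{-1}H^{k}_{<}S_U^{-1}Y_U+S_U^{-1}\frac{\partial Y_U}{\partial t_k}
=S_U^{-1}H^{k}_{<}U+S_U^{-1}\frac{\partial Y_U}{\partial t_k}.
\end{equation*}
For the forward direction, I assume (\ref{B16}); substituting $\frac{\partial Y_U}{\partial t_k}=H^{k}_{\geq}Y_U$ and using $H=S_U\Lambda S_U^{-1}$ (hence $H^k=S_U\Lambda^k S_U^{-1}$, so $S_U^{-1}H^k S_U=\Lambda^k$) yields
\begin{equation*}
\frac{\partial U}{\partial t_k}=S_U^{-1}H^{k}_{<}U+S_U^{-1}H^{k}_{\geq}Y_U=S_U^{-1}H^{k}U=S_U^{-1}H^{k}S_U U=\Lambda^{k}U,
\end{equation*}
which is (\ref{B17}). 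This is the same chain of identities as in Lemma \ref{otro llamado}, just without the initial-condition normalization.

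For the converse, I assume (\ref{B17}) and run the computation backwards. From the displayed identity for $\frac{\partial U}{\partial t_k}$ above and $\frac{\partial U}{\partial t_k}=\Lambda^k U$, I get $S_U^{-1}\frac{\partial Y_U}{\partial t_k}=\Lambda^k U-S_U^{-1}H^{k}_{<}U=S_U^{-1}H^k S_U U - S_U^{-1}H^k_{<}U = S_U^{-1}H^k U - S_U^{-1}H^k_{<}U = S_U^{-1}H^k_{\geq}U$, and since $U=S_U^{-1}Y_U$ this reads $S_U^{-1}\frac{\partial Y_U}{\partial t_k}=S_U^{-1}H^k_{\geq}S_U^{-1}Y_U$. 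Multiplying on the left by $S_U$ and then noting $H^k_{\geq}S_U^{-1}$ need not simplify directly — instead I multiply through by $S_U$ on the left at the earlier stage: from $\frac{\partial Y_U}{\partial t_k}=S_U(\Lambda^k U - H^k_{<}U)=S_U\Lambda^k S_U^{-1}Y_U - S_U H^k_{<}S_U^{-1}Y_U$; using $S_U\Lambda^k S_U^{-1}=H^k$ this is $(H^k - S_U H^k_{<}S_U^{-1})Y_U$. The cleanest route, which I would actually write, is to observe directly that $\frac{\partial Y_U}{\partial t_k}=\frac{\partial (S_U U)}{\partial t_k}=\frac{\partial S_U}{\partial t_k}U+S_U\frac{\partial U}{\partial t_k}=-H^k_{<}S_U U+S_U\Lambda^k U=-H^k_{<}Y_U+H^k Y_U=H^k_{\geq}Y_U$, which is (\ref{B16}).

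The main obstacle is essentially bookkeeping rather than conceptual: one must be careful that $H^k=(S_U\Lambda S_U^{-1})^k=S_U\Lambda^k S_U^{-1}$ (telescoping of the conjugation), that $H^k=H^k_{\geq}+H^k_{<}$ is a genuine decomposition of the matrix into its upper-triangular and strictly-lower-triangular parts, and that all manipulations are legitimate since $S_U\in\mathfrak{G}_{-}$ is invertible by Proposition \ref{propreview1}. No appeal to initial conditions is needed for the equivalence itself; the exponential form $U=Y_E\, U(0)$ from Lemma \ref{otro llamado} is a separate consequence once one also imposes the normalization. I would therefore present the proof as the two short symmetric computations above, remarking that the converse direction is obtained most transparently by differentiating $Y_U=S_U U$ directly.
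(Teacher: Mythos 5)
Your overall strategy coincides with the paper's: the forward implication is exactly the computation of Lemma \ref{otro llamado} (the initial conditions there are only needed for the exponential formula, not for $\frac{\partial U}{\partial t_{k}}=\Lambda^{k}U$), and the converse is obtained by differentiating the factorization and isolating $\frac{\partial Y_{U}}{\partial t_{k}}$; your final version of the converse, differentiating $Y_{U}=S_{U}U$, is correct and is just a streamlined rearrangement of the paper's computation, which instead differentiates $U=S_{U}^{-1}Y_{U}$ and solves for $\frac{\partial Y_{U}}{\partial t_{k}}$.

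However, the displayed algebra in your forward direction (and in your first attempt at the converse) contains a recurring slip. Substituting (\ref{B15}) into $-S_{U}^{-1}\frac{\partial S_{U}}{\partial t_{k}}S_{U}^{-1}Y_{U}$ gives $S_{U}^{-1}H^{k}_{<}S_{U}S_{U}^{-1}Y_{U}=S_{U}^{-1}H^{k}_{<}Y_{U}$, not $S_{U}^{-1}H^{k}_{<}S_{U}^{-1}Y_{U}=S_{U}^{-1}H^{k}_{<}U$ as you wrote. With your version, the subsequent equalities $S_{U}^{-1}H^{k}_{<}U+S_{U}^{-1}H^{k}_{\geq}Y_{U}=S_{U}^{-1}H^{k}U$ and $S_{U}^{-1}H^{k}U=S_{U}^{-1}H^{k}S_{U}U$ are literally false in general (they would require $Y_{U}=U$, respectively $S_{U}U=U$). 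The corrected chain is $\frac{\partial U}{\partial t_{k}}=S_{U}^{-1}H^{k}_{<}Y_{U}+S_{U}^{-1}H^{k}_{\geq}Y_{U}=S_{U}^{-1}H^{k}Y_{U}=S_{U}^{-1}H^{k}S_{U}U=\Lambda^{k}U$, exactly as in Lemma \ref{otro llamado}. This is bookkeeping rather than a missing idea --- your stated plan is sound, and your clean converse via $Y_{U}=S_{U}U$ (giving $\frac{\partial Y_{U}}{\partial t_{k}}=-H^{k}_{<}Y_{U}+H^{k}Y_{U}=H^{k}_{\geq}Y_{U}$) is correct --- but the forward computation as written does not stand until the misplaced $S_{U}^{-1}$ is repaired; the same slip is what forced the dead-end detour in your first pass at the converse.
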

\begin{proof} One of the statements follows from the previous lemma. Now suppose that
\begin{equation*}
\frac{\partial S_{U}}{\partial t_{k}}=-H^{k}_{<}S_{U},\,\,\,\,\,\,\,\,\,\,\frac{\partial U}{\partial t_{k}}=\Lambda^{k}U,
\end{equation*}
where $k=1,2,\cdots,m-1$. Then
\begin{equation*}
\Lambda^{k}U=\frac{\partial U}{\partial t_{k}}=\frac{\partial (S_{U}^{-1}Y_{U})}{\partial t_{k}}=\frac{\partial S_{U}^{-1}}{\partial t_{k}}Y_{U}+S_{U}^{-1}\frac{\partial Y_{U}}{\partial t_{k}}
=-S_{U}^{-1}\frac{\partial S_{U}}{\partial t_{k}}S_{U}^{-1}Y_{U}+S_{U}^{-1}\frac{\partial Y_{U}}{\partial t_{k}},
\end{equation*}
so
\begin{equation*}
\frac{\partial Y_{U}}{\partial t_{k}}=H^{k}Y_{U}+\frac{\partial S_{U}}{\partial t_{k}}S_{U}^{-1}Y_{U}=H^{k}Y_{U}-H^{k}_{<}Y_{U}=H^{k}_{\geq}Y_{U}.
\end{equation*}
hence, we obtain (\ref{B16}).
\end{proof}

Really, the three equations (\ref{B15})-(\ref{B17}) are such that if at least two of them are true, then one can check that the third equation holds.

\section{Properties of parametric linear systems related to the solutions of the block matrix finite discrete KP hierarchy}
\label{Properties}

\subsection{Linear system associated with $H(0)=S_{\iota}\Lambda S^{-1}_{\iota}$ where $H(t)$ is a solution of (\ref{B3})}
\label{subseccion3}

We will start by studying linear systems associated with the initial conditions $H(0)=S_{\iota}\Lambda S^{-1}_{\iota}$ of solutions $H(t)$ of the hierarchy (\ref{B3}). First, we suppose that $S_{\iota}=I_{nm}$, that is, $H(0)=\Lambda$ is the simplest solution of (\ref{B3}).

Let $\Lambda$ be the $mn\times mn$ block matrix shift given for (\ref{B1}). Following \cite{CZ}, let us introduce a linear state system on the space $\mathcal{X^{\tau}}$ (denominated state space)
of matrices $x(\tau)=(x_{1}(\tau),\cdots,x_{m}(\tau))^{T}$ where each $x_{k}(\tau)$ is of order $n\times n$ for $k=1,\cdots,m$, of the form
\begin{align}\label{SDL0}
\frac{d\,x(\tau)}{d \tau} &= \Lambda x(\tau)+ Bv(\tau), \nonumber \\
y(\tau) &= Cx(\tau), \qquad \qquad \qquad \tau \geq 0, \quad x(0)=x_{0}.
\end{align}
where $B$ is a control column vector of order $mn\times n$ which will be specified below. On other hand, $v(\tau)$ belongs to the space $\mathcal{V}$ (denominated input space) of $n\times n$ matrices.
Specifically

\begin{equation}\label{B}
             B=\left(
                \begin{array}{c}
                   M_{-1} \\
                   M_{-2} \\
                   \vdots \\
                   M_{-(m-1)} \\
                   M_{-m} \\
                \end{array}
              \right),
\end{equation}
where the $M_{-k}$, for $k=1,\cdots, m$ are arbitrary constant matrices of order $n\times n$. Moreover $C=(I_{n},O_{n},\dots ,O_{n})$ is the $n\times mn$
observation row vector.

The transfer function of the system (\ref{SDL0}) is the Laurent polynomial
\begin{equation}\label{lp}
F_{0}(z)= C(zI_{nm}-\Lambda)^{-1} B= \frac{M_{-1}z^{m-1}+\cdots+M_{-m}}{z^{m}},
\end{equation}
where $z\neq 0 \in\mathbb{C}$.

\begin{proposition} If $|M_{-m}|\neq 0$ the linear dynamical system (\ref{SDL0}) is controllable. In any case, it is observable.
\end{proposition}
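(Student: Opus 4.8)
The plan is to verify controllability and observability of \eqref{SDL0} directly using the Kalman rank criteria, exploiting the explicit nilpotent-shift structure of $\Lambda$ together with the fact that $\Lambda^{m}=O_{nm}$. For controllability I would compute the block controllability matrix $\mathcal{K}=\left(B,\Lambda B,\Lambda^{2}B,\cdots,\Lambda^{m-1}B\right)$, an $mn\times mn$ matrix. Since $\Lambda$ shifts blocks upward, applying $\Lambda$ to the column $B=(M_{-1},\cdots,M_{-m})^{T}$ drops the top block and appends a zero block at the bottom; iterating, $\Lambda^{j}B$ has $M_{-(j+1)},\cdots,M_{-m}$ in its first $m-j$ block-rows and zeros below. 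Hence $\mathcal{K}$ is, up to a reordering of the block columns, block upper anti-triangular with $M_{-m}$ repeated along the block anti-diagonal: reading block-rows from bottom to top, the $k$-th block row of $\mathcal{K}$ is $\left(M_{-(m-k+1)},M_{-(m-k+2)},\cdots,M_{-m},O_{n},\cdots,O_{n}\right)$. By permuting block columns to move $\mathcal{K}$ into genuine block triangular form one gets a matrix whose block diagonal entries are all $M_{-m}$, so by Lemma~\ref{lema1} (after the permutation, which only changes the sign of the determinant) $|\mathcal{K}|=\pm|M_{-m}|^{m}$. Thus $|M_{-m}|\neq0$ forces $\operatorname{rank}\mathcal{K}=mn$, i.e.\ the system is controllable.

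For observability I would compute the block observability matrix $\mathcal{O}=\left(C^{T},(C\Lambda)^{T},(C\Lambda^{2})^{T},\cdots,(C\Lambda^{m-1})^{T}\right)^{T}$, again $mn\times mn$. Here $C=(I_{n},O_{n},\cdots,O_{n})$ picks out the first block-row, and $C\Lambda^{j}$ equals $(O_{n},\cdots,O_{n},I_{n},O_{n},\cdots,O_{n})$ with the identity sitting in block position $j+1$ (for $0\le j\le m-1$), since $\Lambda$ shifts the "reading position" one block to the right each time. Therefore $\mathcal{O}$ is exactly the $mn\times mn$ block identity matrix $I_{nm}$ (its $(j+1)$-th block-row is the $(j+1)$-th block-row of $I_{nm}$), so $|\mathcal{O}|=1\neq0$ unconditionally, and the system is observable regardless of the choice of the matrices $M_{-k}$. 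This is the "in any case" clause of the statement.

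The computations are routine once the block-shift action of $\Lambda$ is pinned down; the only mildly delicate point is bookkeeping the column permutation that turns the block anti-triangular controllability matrix into block triangular form so that Lemma~\ref{lema1} applies, and checking that this permutation contributes only a nonzero scalar (a sign) to the determinant. An alternative, perhaps cleaner, route for controllability is the Popov–Belevitch–Hautus test: since $\Lambda$ is nilpotent its only eigenvalue is $0$, so one only needs $\operatorname{rank}\left(\, -\Lambda \;\; B\,\right)=mn$ at $\lambda=0$; because $\operatorname{rank}(\Lambda)=(m-1)n$ with the missing rank concentrated in the last block-row, this reduces to requiring that $M_{-m}$ (the last block of $B$) have full rank $n$, i.e.\ $|M_{-m}|\neq0$. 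Either way, the structural nilpotency of $\Lambda$ and the position of $M_{-m}$ are what make the argument work, and neither part presents a genuine obstacle — this is essentially a verification lemma setting up the dressing argument that follows.
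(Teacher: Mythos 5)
Your proposal is correct and follows essentially the same route as the paper: the Kalman rank test, the observation that the controllability matrix is block anti-triangular with $M_{-m}$ on the anti-diagonal, a block-reversal permutation (the paper reverses block rows and transposes, you reverse block columns) so that Lemma~\ref{lema1} gives $\pm|M_{-m}|^{m}\neq 0$, and the identification of the observability matrix with $I_{nm}$ for the unconditional observability claim. The PBH remark is a harmless aside; no substantive difference from the paper's argument.
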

\begin{proof}We must prove that
\begin{equation}\label{nanzov1}
rank(B\,\,\,\Lambda B\,\,\,...\,\,\,\,\Lambda ^{m-1}B)=mn.
\end{equation}

Denote $\Gamma=(B\,\,\,\Lambda B\,\,\,...\,\,\,\,\Lambda ^{m-1}B)$, then (\ref{nanzov1}) is equivalent to the condition
\begin{equation*}
|\Gamma|=\left |
\begin{array}{ccccc}
M_{-1} & M_{-2} & \cdots & M_{-(m-1)} & M_{-m} \\
M_{-2} & M_{-3} & \cdots & M_{-m} & O_{n} \\
\vdots & \vdots & \ddots & O_{n} & \vdots \\
M_{-(m-1)} & M_{-m} & O_{n} & \cdots & \vdots \\
M_{-m} & O_{n} & \cdots & \cdots & O_{n} \\
\end{array}
 \right |\neq 0,
\end{equation*}
what is equivalent in turn to the next
\begin{equation*}
|\Gamma_{\ast}|=\left |
\begin{array}{ccccc}
M_{-m} & O_{n} & \cdots & \cdots & O_{n} \\
M_{-(m-1)} & M_{-m} & O_{n} & \cdots & \vdots \\
\vdots & \vdots & \ddots & O_{n} & \vdots \\
M_{-2} & M_{-3} & \cdots & M_{-m} & O_{n} \\
M_{-1} & M_{-2} & \cdots & M_{-(m-1)} & M_{-m} \\
\end{array}
 \right |\neq 0,
\end{equation*}
this is because $|\Gamma|=-|\Gamma_{\ast}|=-|(\Gamma_{\ast})^{T}|$. Hence from the assumption $|M_{-m}|\neq 0$ and the lemma \ref{lema1}
we conclude $|\Gamma_{\ast}|=(|M_{-m}^{T}|)^{m}=(|M_{-m}|)^{m}\neq 0$. Thus, the system (\ref{SDL0}) is controllable.

To prove that the system (\ref{SDL0}) is observable it is sufficient to show that
\begin{equation}\label{nanzov2}
rank(D \,\,\, \Lambda^{T}D\,\,...\,\, (\Lambda ^{T})^{m-1}D)=mn,
\end{equation}
where $D=C^{T}$. We claim that (\ref{nanzov2}) holds. Indeed, $(D \,\,\, \Lambda^{T}D\,\,...\,\, (\Lambda ^{T})^{m-1}D)$ is the identity matrix
of order $mn\times mn$. So, $|D \,\,\, \Lambda^{T}D\,\,...\,\, (\Lambda ^{T})^{m-1}D|=1$. It implies (\ref{nanzov2}) and that (\ref{SDL0})
is observable.
\end{proof}

\begin{remark}For all $S_{\iota}\in \mathfrak{G}_{-}$, we have $CS_{\iota}=CS^{-1}_{\iota}=C$.
\end{remark}

Taking into account the proposition \ref{callproposition} and the previous remark we can consider a more general linear system
\begin{align}\label{SDLG0}
\frac{d\,x(\tau)}{d \tau} &= S_{\iota}\Lambda S^{-1}_{\iota} x(\tau)+ Bv(\tau), \nonumber \\
y(\tau) &= Cx(\tau), \qquad \qquad \qquad \tau \geq 0, \quad x(0)=x_{0},
\end{align}
where $B$ and $C$ have the same meaning as in the system (\ref{SDL0}).

First of all observe that the transfer function of (\ref{SDLG0}) is of the form (\ref{lp}). Indeed
\begin{align*}
F_{\iota}(z)&=C(zI_{nm}-S_{\iota}\Lambda S^{-1}_{\iota})^{-1} B=C(zS_{\iota}S^{-1}_{\iota}-S_{\iota}\Lambda S^{-1}_{\iota})^{-1} B=
C S_{\iota}(zI_{nm}-\Lambda)^{-1} S^{-1}_{\iota}B \\
&=C(z\mathcal{I}-\Lambda)^{-1} S^{-1}_{\iota}B=\frac{M^{\iota}_{-1}z^{m-1}+\cdots+M^{\iota}_{-m}}{z^{m}},
\end{align*}
where $M^{\iota}_{-k}$ for $k=1,\cdots,m$ are certain matrices of order $n\times n$. We have the following
\begin{proposition}The system (\ref{SDLG0}) is controllable if
\begin{equation*}
|S^{-\iota}_{m1}M_{-1}+S^{-\iota}_{m2}M_{-2}+\cdots+S^{-\iota}_{m(m-1)}M_{-(m-1)}+M_{-m}|\neq 0,
\end{equation*}
where
\begin{equation*}
B=\left(
                \begin{array}{c}
                   M_{-1} \\
                   M_{-2} \\
                   \vdots \\
                   M_{-(m-1)} \\
                   M_{-m} \\
                \end{array}
              \right),\,\,\,\,\,\,\,S^{-1}_{\iota}=\left(
                           \begin{array}{ccccc}
                             I_{n} & O_{n} & \cdots & \cdots & O_{n} \\
                             S^{-\iota}_{21} & I_{n} & \ddots & \ddots & \vdots \\
                             \vdots & \ddots & \ddots & \ddots & \vdots \\
                             \vdots & \ddots & \ddots & I_{n} & O_{n} \\
                             S^{-\iota}_{m1} & \cdots & \cdots & S^{-\iota}_{m(m-1)} & I_{n} \\
                           \end{array}
                         \right)
\end{equation*}
and observable in any case.
\end{proposition}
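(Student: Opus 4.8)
The plan is to reduce the system (\ref{SDLG0}) to the already analysed system (\ref{SDL0}) by a constant change of the state variable, and then to invoke the earlier proposition about (\ref{SDL0}) together with Lemma \ref{lema1}. I would set $\widetilde{x}(\tau)=S_{\iota}^{-1}x(\tau)$; since $S_{\iota}$ is a constant invertible matrix, (\ref{SDLG0}) is equivalent to
\begin{align*}
\frac{d\,\widetilde{x}(\tau)}{d\tau} &= \Lambda\,\widetilde{x}(\tau)+S_{\iota}^{-1}B\,v(\tau), \\
y(\tau) &= CS_{\iota}\,\widetilde{x}(\tau)=C\,\widetilde{x}(\tau),
\end{align*}
the last equality being the remark $CS_{\iota}=C$. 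Hence (\ref{SDLG0}) is, up to the invertible state-space transformation $S_{\iota}$, the system (\ref{SDL0}) with the same observation row vector $C$ and with $B$ replaced by $\widetilde{B}:=S_{\iota}^{-1}B$. Because the controllability matrix of the transformed triple equals $S_{\iota}^{-1}$ times the controllability matrix of (\ref{SDLG0}), and the observability matrix of (\ref{SDLG0}) equals that of the transformed triple times $S_{\iota}^{-1}$, both full-rank conditions are unaffected, so it suffices to study $(\Lambda,\widetilde{B},C)$.

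Observability is then immediate: the observability matrix $(D\ \ \Lambda^{T}D\ \ \cdots\ \ (\Lambda^{T})^{m-1}D)$ with $D=C^{T}$ does not involve $\widetilde{B}$, and it was shown in the proof of the proposition about (\ref{SDL0}) to equal $I_{mn}$; therefore (\ref{SDLG0}) is observable with no hypothesis on $B$.

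For controllability I would read off the last $n\times n$ block of $\widetilde{B}=S_{\iota}^{-1}B$. The matrix $S_{\iota}^{-1}$ displayed in the statement is block lower-triangular with identity blocks on the diagonal, so its last block row is $(S^{-\iota}_{m1},\ \ldots,\ S^{-\iota}_{m(m-1)},\ I_{n})$; multiplying by $B$ shows that the bottom block of $\widetilde{B}$ is exactly $S^{-\iota}_{m1}M_{-1}+\cdots+S^{-\iota}_{m(m-1)}M_{-(m-1)}+M_{-m}$. The argument of the earlier proposition then applies verbatim to $(\Lambda,\widetilde{B},C)$: the controllability matrix $(\widetilde{B}\ \ \Lambda\widetilde{B}\ \ \cdots\ \ \Lambda^{m-1}\widetilde{B})$, after the row permutation and transposition used there, is block upper-triangular with the above block on its diagonal, so by Lemma \ref{lema1} its determinant is a power of $|S^{-\iota}_{m1}M_{-1}+\cdots+S^{-\iota}_{m(m-1)}M_{-(m-1)}+M_{-m}|$, which is nonzero precisely under the stated hypothesis; hence the rank is $mn$ and (\ref{SDLG0}) is controllable.

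I do not anticipate a genuine obstacle here; the only points needing a little care are verifying that controllability and observability are preserved under the similarity $x\mapsto S_{\iota}^{-1}x$ (which amounts to factoring the invertible $S_{\iota}$ out of the relevant block matrices) and correctly identifying the last block row of $S_{\iota}^{-1}$, so as to land on the exact determinant condition appearing in the statement.
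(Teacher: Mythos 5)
Your proposal is correct and follows essentially the same route as the paper: factoring the constant invertible $S_{\iota}$ out of the controllability matrix (the paper does this directly, you phrase it as the state change $\widetilde{x}=S_{\iota}^{-1}x$, which is equivalent), identifying the last block of $S_{\iota}^{-1}B$ as $S^{-\iota}_{m1}M_{-1}+\cdots+S^{-\iota}_{m(m-1)}M_{-(m-1)}+M_{-m}$, and then repeating the Hankel-type argument with Lemma \ref{lema1}, while observability reduces via $CS_{\iota}=C$ to the identity observability matrix of the first proposition. No gaps; your version merely spells out the observability step that the paper dismisses as clear.
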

\begin{proof}The system (\ref{SDLG0}) is clearly observable, it remains to see that it is controllable. As usual, we need prove that
\begin{equation}\label{nanzovg1}
rank(B\,\,\,S_{\iota}\Lambda S^{-1}_{\iota} B\,\,\,...\,\,\,\,S_{\iota}\Lambda ^{m-1}S^{-1}_{\iota} B)=
rank(S_{\iota}S^{-1}_{\iota}B\,\,\,S_{\iota}\Lambda S^{-1}_{\iota} B\,\,\,...\,\,\,\,S_{\iota}\Lambda ^{m-1}S^{-1}_{\iota} B)=mn.
\end{equation}

Since $|S_{\iota}|=1$, the condition (\ref{nanzovg1}) is equivalent to the following condition
\begin{equation*}
rank(S^{-1}_{\iota}B\,\,\,\Lambda S^{-1}_{\iota} B\,\,\,...\,\,\,\,\Lambda ^{m-1}S^{-1}_{\iota} B)
=rank(B_{\iota}\,\,\,\Lambda B_{\iota}\,\,\,...\,\,\,\,\Lambda ^{m-1}B_{\iota})=mn,
\end{equation*}
where $B_{\iota}=S^{-1}_{\iota}B$. Now, taking into account that the last component of $B_{\iota}$ is precisely $S^{-\iota}_{m1}M_{-1}+S^{-\iota}_{m2}M_{-2}+\cdots+S^{-\iota}_{m(m-1)}M_{-(m-1)}+M_{-m}$, the proof of this last
condition is done in a similar form to the proof of (\ref{nanzov1}) of the previous proposition.
\end{proof}

\subsection{General linear systems related to the solutions $H(t)$ of the hierarchy (\ref{B3})}
\label{subsection4}

Let us consider a parametric linear state system on the space $\mathcal{X^{\tau}}$, with parameters $t=(t_1,...,t_m)$, given as
\begin{align}\label{SDL}
\frac{d\,x(\tau,t)}{d \tau} &= H(t) x(\tau,t)+ B(t)v(\tau,t), \nonumber  \\
y(\tau,t) &= C(t)x(\tau,t), \qquad \qquad \qquad \tau \geq 0, \quad x(0,t)=x_{0}(t).
\end{align}
where $B(t)$ is the control column vector considered as a transformation from $\mathcal{V}$ to $\mathcal{X^{\tau}}$ and $C(t)$
is the observation row vector. Both vectors and the block matrix $H(t)$ are defined according to the following proposition

\begin{proposition}\label{alerta}
Suppose that $U(t)\in \mathfrak{BG}(mn)$, $U=S_{U}^{-1}(t)Y_{U}(t)$ being the equations (\ref{B14}) hold
and $Y_{U}(0)=I_{nm}$. Define $B(t)$ and $C(t)$ of the following form
$B(t)=Y_{U}(t)B$ and $C(t)=D(t)^{T}$ where $D(t)=(S_{U}^{-1}(t))^{T}D$. Here, $B$ and $D$ are defined as in the previous subsection.
Then, $B(t)$ and $D(t)$ satisfy the linear equations
\begin{equation}\label{nanzov3}
\frac{\partial B(t)}{\partial t_{k}}=H_{\geq}^{k}(t)B(t)\ ,\
\frac{\partial D(t)}{\partial t_{k}}=(H_{<}^{k}(t))^{T}D(t) ,\qquad k=1,...,m-1,
\end{equation}
with initial conditions $B(0)=B$ and $D(0)=D$. We recall that as usual $H(t)=S_{U}\Lambda S_{U}^{-1}$.
\end{proposition}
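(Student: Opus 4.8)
The plan is to verify the two linear evolution equations in (\ref{nanzov3}) by direct differentiation, using the evolution equations (\ref{B14}) satisfied by $S_{U}$ and $Y_{U}$, since $B$ and $D$ are constant matrices. First I would treat $B(t) = Y_{U}(t)B$. Differentiating with respect to $t_{k}$ and using the second equation of (\ref{B14}), namely $\partial Y_{U}/\partial t_{k} = H^{k}_{\geq}Y_{U}$, one gets immediately
\begin{equation*}
\frac{\partial B(t)}{\partial t_{k}} = \frac{\partial Y_{U}}{\partial t_{k}}B = H^{k}_{\geq}(t)\,Y_{U}(t)\,B = H^{k}_{\geq}(t)\,B(t),
\end{equation*}
which is the first equation of (\ref{nanzov3}). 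The initial condition $B(0)=B$ follows from $Y_{U}(0)=I_{nm}$.

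Next I would treat $D(t) = (S_{U}^{-1}(t))^{T}D$. The key identity needed here is an expression for $\partial (S_{U}^{-1})/\partial t_{k}$. Starting from $S_{U}S_{U}^{-1}=I_{nm}$ and differentiating, one obtains $\partial S_{U}^{-1}/\partial t_{k} = -S_{U}^{-1}(\partial S_{U}/\partial t_{k})S_{U}^{-1}$; then substituting the first equation of (\ref{B14}), $\partial S_{U}/\partial t_{k} = -H^{k}_{<}S_{U}$, gives $\partial S_{U}^{-1}/\partial t_{k} = S_{U}^{-1}H^{k}_{<}$. Transposing, $\partial (S_{U}^{-1})^{T}/\partial t_{k} = (H^{k}_{<})^{T}(S_{U}^{-1})^{T}$. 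Hence
\begin{equation*}
\frac{\partial D(t)}{\partial t_{k}} = \frac{\partial (S_{U}^{-1})^{T}}{\partial t_{k}}D = (H^{k}_{<}(t))^{T}(S_{U}^{-1}(t))^{T}D = (H^{k}_{<}(t))^{T}D(t),
\end{equation*}
which is the second equation of (\ref{nanzov3}), and the initial condition $D(0)=D$ again follows from $S_{U}(0)=I_{nm}$.

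This argument is essentially a routine computation once the correct formula for the derivative of the inverse is in place; there is no serious obstacle. The one point that warrants a sentence of care is justifying that the powers appearing in $H^{k}_{<}$ and $H^{k}_{\geq}$ refer to the $k$-th matrix power of $H$ (and the lower/strictly-upper triangular truncation thereof), consistently with the convention fixed in (\ref{B3}) and (\ref{B14}), and that the product rule for differentiation applies entrywise to these $mn\times mn$ matrix functions — which it does, since all entries are $C^{1}$ in the variables $t_{1},\dots,t_{m-1}$. I would also remark that the statement is purely a consequence of (\ref{B14}) and does not require $U$ itself to satisfy (\ref{B17}); the hypothesis $U\in\mathfrak{BG}(mn)$ with the stated factorization is used only to guarantee that $S_{U}$ and $Y_{U}$, hence $B(t)$ and $D(t)$, are well-defined and differentiable.
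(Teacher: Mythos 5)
Your derivation of the two evolution equations is correct and is essentially the same computation as the paper's: differentiate $B(t)=Y_{U}(t)B$ using the second equation of (\ref{B14}), and differentiate $D(t)=(S_{U}^{-1}(t))^{T}D$ using the derivative-of-the-inverse formula together with the transpose of the first equation of (\ref{B14}); the paper merely transposes before inverting rather than after, which is immaterial.

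The one genuine flaw is your justification of the initial condition $D(0)=D$: you write that it ``follows from $S_{U}(0)=I_{nm}$,'' but that is not among the hypotheses --- the proposition only assumes $Y_{U}(0)=I_{nm}$, and $S_{U}(0)$ is left arbitrary in $\mathfrak{G}_{-}$. The paper closes this point differently: since $S_{U}^{-1}(t)\in\mathfrak{G}_{-}$ is block lower triangular with $I_{n}$ blocks on the diagonal, $(S_{U}^{-1}(0))^{T}$ is block upper triangular with $I_{n}$ diagonal blocks, and because $D=C^{T}=(I_{n},O_{n},\ldots,O_{n})^{T}$ the product $(S_{U}^{-1}(0))^{T}D$ is just the first block column of $(S_{U}^{-1}(0))^{T}$, which is $D$ itself. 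So $D(0)=D$ holds for \emph{any} choice of $S_{U}(0)$, a point the paper explicitly emphasizes (it is the analogue of the earlier remark $CS_{\iota}=CS_{\iota}^{-1}=C$). As written, your argument proves the initial condition only in the special case $S_{U}(0)=I_{nm}$; replace that sentence with the structural observation above and the proof matches the paper in full generality. (Your closing remark that the result uses only (\ref{B14}) and not (\ref{B17}) is correct.)
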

\begin{proof}For $k=1,\cdots,m-1$, we have
\begin{equation*}
\frac{\partial B(t)}{\partial t_{k}}=\frac{\partial Y_{U}(t)}{\partial t_{k}}B=H_{\geq}^{k}(t)Y_{U}(t)B=H_{\geq}^{k}(t)B(t),
\end{equation*}
and clearly $B(0)=Y_{U}(0)B=B$. On other hand,
\begin{equation*}
\frac{\partial D(t)}{\partial t_{k}}=\frac{\partial (S_{U}^{-1}(t))^{T}}{\partial t_{k}}D=\frac{\partial (S_{U}^{T}(t))^{-1}}{\partial t_{k}}D,
\end{equation*}
and taking into account that
\begin{equation*}
\frac{\partial S_{U}^{T}(t)}{\partial t_{k}}=-S_{U}^{T}(t)(H_{<}^{k}(t))^{T},
\end{equation*}
then, combining the two previous equations, we arrive to the following equality
\begin{equation*}
\frac{\partial D(t)}{\partial t_{k}}=-(S_{U}^{T}(t))^{-1}\frac{\partial S_{U}^{T}(t)}{\partial t_{k}}(S_{U}^{T}(t))^{-1}D
=(H_{<}^{k}(t))^{T}(S_{U}^{T}(t))^{-1}D=(H_{<}^{k}(t))^{T}D(t).
\end{equation*}

Observe that $(S_{U}^{-1}(t))^{T}$ is a block upper triangular matrix whose diagonal is formed with the identity matrix of order $n$. Hence, we can verify $D(0)=(S_{U}^{-1}(0))^{T}D=D$. It is interesting to observe that independently of the choice of $S(0)^{-1}$ as initial condition in the factorization of $U$, the flow for $D(t)$ always begins in $D$.
\end{proof}

The transfer function of the system (\ref{SDL}) will be the following matrix-valued function as a function of $t$
\begin{equation} \label{ftSDL}
F(z,t) = C(t)(zI_{nm}-H(t))^{-1} B(t).
\end{equation}

\begin{remark}The transfer function (\ref{ftSDL}) has the form
\begin{equation}\label{nanzov4}
F(z,t)=\frac{Q_{-1}(t)z^{m-1}+\cdots+Q_{-m}(t)}{z^{m}},
\end{equation}
where the $Q_{-k}(t)$, for $k=1, \dots,m-1, m$ are certain $n\times n$ matrices, that is, it is a Laurent polynomial. In fact,
\begin{align*}
F(z,t) &= C(t)(zI_{nm}-H(t))^{-1} B(t)=(D(t))^{T}\left ( zS_{U}(t)S^{-1}_{U}(t)-S_{U}(t)\Lambda S^{-1}_{U}(t) \right )^{-1}Y_{U}(t)B \\
&=C S^{-1}_{U}(t)\left ( zS_{U}(t)S^{-1}_{U}(t)-S_{U}(t)\Lambda S^{-1}_{U}(t) \right )^{-1}Y_{U}(t)B=C(zI_{nm}-\Lambda)^{-1}U(t)B.
\end{align*}
\end{remark}

\begin{theorem}
In order to recover a Laurent polynomial $L(t,z)$ of the form
\begin{equation*}
L(t,z)=\frac{P_{-1}(t)z^{m-1}+\cdots+P_{-m}(t)}{z^{m}},
\end{equation*}
as the transfer function of a linear system (\ref{SDL}) with $C(t)$, $B(t)$ and $H(t)$  defined as in the proposition \ref{alerta}, is necessary and sufficient that there exists a block matrix-valued function $U(t)\in \mathfrak{BG}(mn)$ for all $t\geq 0$ such that $$(P^{T}_{-1}(t),P^{T}_{-2}(t),\cdots, P^{T}_{-(m-1)}(t),P^{T}_{-m}(t))^{T}=U(t)B.$$
In this case, $$B=(U(0))^{-1}(P^{T}_{-1}(0),P^{T}_{-2}(0),\cdots, P^{T}_{-(m-1)}(0),P^{T}_{-m}(0))^{T}.$$
\end{theorem}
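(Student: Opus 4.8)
The plan is to exploit the key identity established in the remark immediately preceding the theorem, namely that the transfer function of the system (\ref{SDL}) satisfies
\begin{equation*}
F(z,t) = C(zI_{nm}-\Lambda)^{-1}U(t)B,
\end{equation*}
together with the explicit formula (\ref{i3})--(\ref{i6}) for how a Laurent polynomial of the stated form is reproduced by the pair $\bigl(C,(zI_{nm}-\Lambda)^{-1}\bigr)$. First I would write out $C(zI_{nm}-\Lambda)^{-1}$ explicitly: from the computation of $\Pi(z)=(zI_{nm}-\Lambda)^{-1}$ given in the introduction, multiplying on the left by $C=(I_{n},O_{n},\ldots,O_{n})$ picks out the first block row, so $C(zI_{nm}-\Lambda)^{-1} = (z^{-1}I_{n},z^{-2}I_{n},\ldots,z^{-m}I_{n})$. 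Hence, writing $W(t) := U(t)B \in M_{mn\times n}$ in block-column form $W(t)=(W_{1}(t)^{T},\ldots,W_{m}(t)^{T})^{T}$, we get
\begin{equation*}
F(z,t) = \sum_{k=1}^{m} z^{-k} W_{k}(t) = \frac{W_{1}(t)z^{m-1}+\cdots+W_{m}(t)}{z^{m}}.
\end{equation*}

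Next I would argue the equivalence. For sufficiency: if there exists $U(t)\in\mathfrak{BG}(mn)$ with $U(t)B = (P_{-1}^{T}(t),\ldots,P_{-m}^{T}(t))^{T}$, then building $H(t)$, $B(t)$, $C(t)$ from this $U(t)$ as in Proposition \ref{alerta} (which is legitimate precisely because $U(t)\in\mathfrak{BG}(mn)$ so the factorization $U=S_{U}^{-1}Y_{U}$ with (\ref{B14}) and $Y_{U}(0)=I_{nm}$ is available), the displayed identity gives $F(z,t)=\sum_{k} z^{-k}P_{-k}(t) = L(t,z)$ as desired. For necessity: if the system (\ref{SDL}) with data coming from Proposition \ref{alerta} has transfer function equal to $L(t,z)$, then comparing the two Laurent-polynomial expansions coefficient by coefficient in powers of $z$ forces $W_{k}(t) = P_{-k}(t)$ for each $k$, i.e. $U(t)B = (P_{-1}^{T}(t),\ldots,P_{-m}^{T}(t))^{T}$; and since the construction in Proposition \ref{alerta} presupposes the existence of such a $U(t)\in\mathfrak{BG}(mn)$, the condition is met. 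Finally, evaluating $U(t)B = (P_{-1}^{T}(t),\ldots,P_{-m}^{T}(t))^{T}$ at $t=0$ and using that $U(0)$ is invertible (every element of $\mathfrak{BG}(mn)$ is nonsingular, being a product $S_{U}^{-1}(0)Y_{U}(0)$ of invertible matrices) yields $B = (U(0))^{-1}(P_{-1}^{T}(0),\ldots,P_{-m}^{T}(0))^{T}$.

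The only genuine subtlety — and the place I would be most careful — is the direction of the logical reading in the necessity part: one must keep clear that the statement is about realizability \emph{through the specific construction of Proposition \ref{alerta}}, so that "$L(t,z)$ is the transfer function of such a system" already carries with it the existence of an admissible $U(t)\in\mathfrak{BG}(mn)$; the coefficient comparison then merely identifies which $U$ it must be (any $U$ with $U(t)B$ equal to the prescribed column). A secondary point worth a line is that the identity $F(z,t)=C(zI_{nm}-\Lambda)^{-1}U(t)B$ is an identity of rational (Laurent-polynomial) matrix functions valid for all $z\neq 0$, so equating it to $L(t,z)$ legitimately reduces to equating coefficients; no analytic subtlety intervenes because both sides are finite Laurent polynomials in $z^{-1}$ with matrix coefficients. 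Everything else is the routine block-matrix bookkeeping already carried out in the preceding remarks.
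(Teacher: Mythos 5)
Your proposal is correct and follows essentially the same route as the paper: both rest on the remark that $F(z,t)=C(zI_{nm}-\Lambda)^{-1}U(t)B$ together with the realization identity (\ref{i3}) expressing $L(t,z)$ as $C(zI_{nm}-\Lambda)^{-1}(P^{T}_{-1}(t),\ldots,P^{T}_{-m}(t))^{T}$, with sufficiency obtained by chaining these equalities through the factorization $U=S_{U}^{-1}Y_{U}$ of Proposition \ref{alerta}. Your explicit block-row expansion of $C(zI_{nm}-\Lambda)^{-1}$ and coefficient comparison simply spell out the step the paper dismisses as ``the necessity is clear,'' so the two arguments coincide in substance.
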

\begin{proof}Let us assume that
$$(P^{T}_{-1}(t),P^{T}_{-2}(t),\cdots, P^{T}_{-(m-1)}(t),P^{T}_{-m}(t))^{T}=U(t)B$$ for some block matrix-valued function
$U(t)\in \mathfrak{BG}(mn)$ for all $t\geq 0$. It implies $$B=(U(0))^{-1}(P^{T}_{-1}(0),P^{T}_{-2}(0),\cdots, P^{T}_{-(m-1)}(0),P^{T}_{-m}(0))^{T}.$$
Since $U(t)=S^{-1}_{U}(t)Y_{U}(t)$, then from the computation performed in the previous remark follows
\begin{align*}
L(t,z)&=C(zI_{nm}-\Lambda)^{-1}(P^{T}_{-1}(t),P^{T}_{-2}(t),\cdots, P^{T}_{-(m-1)}(t),P^{T}_{m}(t))^{T} \\
&=C(zI_{nm}-\Lambda)^{-1}U(t)B=C(zI_{nm}-\Lambda)^{-1}S^{-1}_{U}(t)Y_{U}(t)B \\
&=C(t)(zI_{nm}-H(t))^{-1} B(t)=F(z,t),
\end{align*}
where $H(t)=S_{U}(t)\Lambda S^{-1}_{U}(t)$, and both $C(t)$, and $B(t)$ as in the proposition \ref{alerta}. Therefore the condition is sufficient.
The necessity is clear using again the previous remark.
\end{proof}

We continue with our study of the properties of the family of linear system (\ref{SDL}).

\begin{proposition} The parametric family of linear dynamical system (\ref{SDL}), with $C(t)$, $B(t)$ and $H(t)$  defined as in the
proposition \ref{alerta} such that $U(0)=I_{nm}$, is controllable, if we assume that $|M_{-m}|\neq 0$ where $$B=(M^{T}_{-1},M^{T}_{-2},\cdots,M^{T}_{-(m-1)},M^{T}_{-m})^{T},$$ and observable in any case.
\end{proposition}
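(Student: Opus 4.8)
The plan is to reduce the Kalman controllability and observability matrices of the parametric family (\ref{SDL}) to those of the simplest system, the one with $H=\Lambda$, whose full rank was already established in the first Proposition of Subsection \ref{subseccion3}.

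First I would record the structural facts that make this possible. Under the standing hypothesis $U(0)=I_{nm}$, Theorem \ref{teoimp1} gives $\partial U/\partial t_{k}=\Lambda^{k}U$, hence $U(t)=e^{\sum_{k=1}^{m-1}\Lambda^{k}t_{k}}=Y_{E}(t)$; in particular $U(t)$ is a convergent power series in $\Lambda$, so it commutes with every power $\Lambda^{j}$, and it is invertible. Recall also that $H(t)=S_{U}(t)\Lambda S_{U}^{-1}(t)$, $B(t)=Y_{U}(t)B$, $D(t)=(S_{U}^{-1}(t))^{T}D$ with $D=C^{T}$, and $Y_{U}(t)=S_{U}(t)U(t)$, so $Y_{U}(t)$ is nonsingular.

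For controllability I would compute, for $0\le j\le m-1$, the telescoping identity
$$H(t)^{j}B(t)=S_{U}(t)\Lambda^{j}S_{U}^{-1}(t)Y_{U}(t)B=S_{U}(t)\Lambda^{j}U(t)B=S_{U}(t)U(t)\Lambda^{j}B=Y_{U}(t)\Lambda^{j}B,$$
where the commutation of $U(t)$ with $\Lambda^{j}$ is used in the third equality. Hence the controllability matrix factors as $\left(B(t),H(t)B(t),\dots,H(t)^{m-1}B(t)\right)=Y_{U}(t)\,\Gamma$ with $\Gamma=(B,\Lambda B,\dots,\Lambda^{m-1}B)$ the controllability matrix of the system (\ref{SDL0}). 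Since $Y_{U}(t)$ is invertible, the rank equals $\operatorname{rank}\Gamma$, which is $mn$ whenever $|M_{-m}|\neq 0$ by the computation in the first Proposition of Subsection \ref{subseccion3} (pass to the block lower triangular matrix $\Gamma_{\ast}$ and apply Lemma \ref{lema1}). Dually, using $H(t)^{T}=(S_{U}^{-1}(t))^{T}\Lambda^{T}S_{U}^{T}(t)$ and the cancellation $S_{U}^{T}(t)(S_{U}^{-1}(t))^{T}=I_{nm}$, one gets $(H(t)^{T})^{j}D(t)=(S_{U}^{-1}(t))^{T}(\Lambda^{T})^{j}D$, so the observability matrix equals $(S_{U}^{-1}(t))^{T}$ times $\left(D,\Lambda^{T}D,\dots,(\Lambda^{T})^{m-1}D\right)$; the latter is $I_{nm}$ as observed in the same earlier Proposition, and $(S_{U}^{-1}(t))^{T}$ is invertible, so the rank is $mn$ for every $t$.

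The only point requiring care is the bookkeeping of which factor commutes with $\Lambda$: it is $U(t)$, and this is exactly why the hypothesis $U(0)=I_{nm}$ (stronger than the $Y_{U}(0)=I_{nm}$ of Proposition \ref{alerta}) is needed — without it $U(t)=Y_{E}(t)S_{U}(0)^{-1}$ need not be a power series in $\Lambda$ and the reduction to $\Gamma$ breaks. Once this is handled the rest is the verbatim argument of the $H=\Lambda$ case, so no further obstacle arises.
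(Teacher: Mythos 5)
Your proposal is correct and follows essentially the same route as the paper: both factor the Kalman controllability matrix as $Y_{U}(t)\,(B,\Lambda B,\dots,\Lambda^{m-1}B)$ using $Y_{U}=S_{U}U$ and the commutation of $U(t)=e^{\sum_{k}\Lambda^{k}t_{k}}$ with $\Lambda$ (which is where $U(0)=I_{nm}$ enters, exactly as you flag), and factor the observability matrix as $(S_{U}^{T}(t))^{-1}$ times the identity block matrix $(D,\Lambda^{T}D,\dots,(\Lambda^{T})^{m-1}D)$, then invoke the rank computation of the earlier proposition for the $H=\Lambda$ case. No substantive difference from the paper's argument.
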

\begin{proof}In fact, we have
\begin{align*}
&rank(B(t)\,\,\,H(t)B(t)\,\,\cdots\cdots \,\,H^{{m-1}}(t)B(t)) \\
&=rank(Y_{U}(t)B\,\,\,\,S_{U} \Lambda S_{U}^{-1}(t)Y_{U}(t)B\,\,\cdots\cdots\,\,S_{U}(t)\Lambda ^{{m-1}}S_{U}^{-1}(t)Y_{U}(t)B) \\
&=rank(S_{U}(t)(U(t)B\,\,\,\,\Lambda U(t)B\,\,\,\cdots\cdots\,\,\,\Lambda^{{m-1}}U(t)B).
\end{align*}

Now under our assumption $\Lambda$ and $U(t)$ commute (see the proof of lemma \ref{otro llamado}). Hence
\begin{equation*}
rank(B(t)\,\,\,H(t)B(t)\,\,\cdots\cdots \,\,H^{{m-1}}(t)B(t))=rank(Y_{U}B\,\,\,Y_{U}\Lambda B\,\,\,\cdots\cdots\,\,\,Y_{U}\Lambda^{m-1}B).
\end{equation*}

Thus, (\ref{SDL}) is controllable if and only if $rank(Y_{U}B\,\,\,Y_{U}\Lambda B\,\,\,\cdots\cdots\,\,\,Y_{U}\Lambda^{m-1}B)=mn$. But quickly one sees that
\begin{equation*}
|Y_{U}B\,\,\,\,Y_{U}\Lambda B\,\,\,\cdots\cdots\,\,\,Y_{U}\Lambda^{m-1}B|=|Y_{U}||B\,\,\,\Lambda B\,\,\,\cdots\cdots\,\,\,\Lambda^{m-1}B|\neq 0,
\end{equation*}
this implies the claimed result. \\

We shall show the observably. Notice that
\begin{align*}
&rank(D(t)\,\,\,\,H^{T}(t)D(t)\,\,\,\cdots\cdots \,\,\,(H^{T})^{m-1}(t)D(t)) \\
&=rank((S_{U}^{T}(t))^{-1}D\,\,\,\,\,(S_{U}^{T}(t))^{-1}\Lambda^{T}S_{U}^{T}(t)(S_{U}^{T}(t))^{-1}D\,\,\,\cdots\cdots\,\,\,(S_{U}^{T}(t))^{-1}
(\Lambda^{T})^{m-1}S_{U}^{T}(t)(S_{U}^{T}(t))^{-1}D) \\
&=rank((S_{U}^{T}(t))^{-1}(D\,\,\,\,\Lambda ^{T}D\,\,\,\cdots\cdots\,\,\,(\Lambda^{T})^{m-1}D))=mn,
\end{align*}
this last is because
\begin{equation*}
|(S_{U}^{T}(t))^{-1}(D\,\,\,\,\Lambda ^{T}D\,\,\,\cdots\cdots\,\,\,(\Lambda^{T})^{m-1}D)|=|(S_{U}^{T}(t))^{-1}|\,\,|D\,\,\,\,\Lambda ^{T}D\,\,\,\cdots\cdots\,\,\,(\Lambda^{T})^{m-1}D|\neq 0,
\end{equation*}
hence, the parametric family of linear dynamical system (\ref{SDL}) is observable.
\end{proof} \\

Consider a transfer function $F(z,t)=C(zI_{nm}-\Lambda)^{-1}U(t)B$ of (\ref{SDL}) such that $U(0)=I_{nm}$, that is, let $F(z,t)$ be a transfer
function for which $U(t)=Y_{E}(t_{1},\cdots,t_{m-1})$, then
$$F(z,0)=C(zI_{nm}-\Lambda)^{-1}B=F_{0}(z),$$
and we can characterize the flow of $F(z,t)$. When calculating the derivative with respect to $t_{k}$ of  $F(z,t)$, we obtain
\begin{equation*}
\frac{\partial F(z,t)}{\partial t_{k}}=C(zI_{nm}-\Lambda)^{-1}\frac{\partial U(t)}{\partial t_{k}}B=
C(zI_{nm}-\Lambda)^{-1}\Lambda^{k}U(t)B=C(zI_{nm}-\Lambda)^{-1}U(t)\Lambda^{k}B.
\end{equation*}

\section{Conclusions}

In this paper, we introduced and studied an integrable system (hierarchy) called by us, \textbf{the block matrices version of the finite discrete KP hierarchy}. In addition, we introduced a group factorization for equation system, necessary to connect the control theory of linear dynamical systems with this integrable system. Thus, we established a correspondence between the solutions of the hierarchy with a parametric linear system. We see that the linear system defined by means of the simplest solution of the integrable system is controllable and observable. Then, because of this fact, it is possible to verify that any solution of the integrable hierarchy, obtained by the dressing method of the simplest solution, defines a parametric linear system, which is also controllable and observable. Finally, we studied the transfer function family corresponding to parametric linear systems whose coefficients are block matrices. Thus, these transfer functions constitute Laurent polynomials whose coefficients are square matrices.

\section*{Acknowledgment}

Nancy L\'{o}pez thanks CIMAT for its hospitality during her visit to the Center between March $19$ and $23$ of 2018, and also thanks the financial
support through CONACYT project $45886$. Ra\'{u}l Felipe-Sosa thanks the support of the Mexico Secretary of Education (SEP) and the hospitality of
the School of Physical-Mathematical Sciences of the BUAP, during his postdoctoral stay, from July $2017$ to July $2018$, period in which part of this
work was completed. The last named author was supported by CONACYT grant $45886$.

We would like to thank the anonymous reviewers who with their comments helped improve the manuscript.

\end{document}